\documentclass[12pt,article,fleqn]{scrartcl}
  
\RequirePackage{amsthm,amsmath,natbib}
\RequirePackage{amssymb,amstext,dsfont, mathtools}
\usepackage[pdfpagemode=UseOutlines ,plainpages=false
,hypertexnames=false ,hyperindex=true,colorlinks=true]{hyperref}
	\makeatletter%
	\Hy@breaklinkstrue%
	\makeatother%
\usepackage{color}
%\definecolor{darkred}{rgb}{0.5,0,0}
\definecolor{darkred}{rgb}{0.6,0.0,0.1}
\definecolor{darkgreen}{rgb}{0,0.5,0}
\definecolor{darkblue}{rgb}{0,0,0.5}
\hypersetup{colorlinks ,linkcolor=darkblue ,filecolor=darkgreen
,urlcolor=darkblue ,citecolor=black}

\usepackage{ bm}
\usepackage{bbm}

% BIBLIOGRAPHY
\renewcommand{\cite}{\citet}
\bibliographystyle{abbrvnat}

\usepackage{graphicx}
\usepackage{pgf, tikz}
\usepackage{enumerate}
\usepackage{subcaption}
\usepackage{animate} 
\usepackage{multirow}
\usetikzlibrary{positioning}

\def\Ex{\mathop{{\mathbf E}\null}\nolimits}%
\newcommand{\PP}{\mathbb{P}}

\def\1{\mathop{\mathbbm 1}\nolimits}
\DeclarePairedDelimiter{\mynorm}{\lVert}{\rVert}

\definecolor{dgreen}{rgb}{0,0.5,0}
\definecolor{dblue}{rgb}{0,0,0.9}
\definecolor{dred}{rgb}{0.6,0.0,0.1}
\definecolor{dgold}{rgb}{0.5,0.3,0.0}
\definecolor{dvio}{rgb}{0.6,0.3,0.5}
\definecolor{gray}{rgb}{0.5,0.5,0.5}

\newcommand{\Perp}{\perp \!\!\! \perp}

\usepackage{setspace}
%%\doublespacing
%
%
%\def\spacingset#1{\renewcommand{\baselinestretch}%
%{#1}\small\normalsize} \spacingset{1}
%
%
\addtolength{\oddsidemargin}{-.5in}%
\addtolength{\evensidemargin}{-.5in}%
\addtolength{\textwidth}{1in}%
\addtolength{\textheight}{-.05in}%
\addtolength{\topmargin}{-.1in}%

\theoremstyle{mysc}\newtheorem{prop}{Proposition}[section]
\theoremstyle{mysc}\newtheorem{coro}[prop]{Corollary}
\theoremstyle{mysc}\newtheorem{theo}[prop]{Theorem}
\theoremstyle{mysc}
\theoremstyle{mysc}\newtheorem{lem}[prop]{Lemma}
\theoremstyle{myex}\newtheorem{rem}{Remark}[section]
\theoremstyle{myex}\newtheorem{example}{Example}[section]
\theoremstyle{myex}

 \theoremstyle{mysc}\newtheorem{assA}{Assumption}%[section]

\newtheorem*{assumption*}{\assumptionnumber}
\providecommand{\assumptionnumber}{}
\makeatletter

\numberwithin{equation}{section}

% NOTES

  \author{{\textsc{Christoph Breunig}}\thanks{Department of Economics, Emory University, Rich Memorial Building, Atlanta, GA 30322, USA. Email:
\url{	christoph.breunig@emory.edu}}\\
{\small \textit{Emory University}}
  \and{\textsc{Stephan Martin}}\thanks{Deutsche Bundesbank; Frankfurt, Germany. The paper represents the author's personal opinion and does not necessarily reflect the views of the Deutsche Bundesbank or its staff.
  Humboldt-Universit\"at zu Berlin, Spandauer Stra\ss e 1, 10178 Berlin, Germany, e-mail: \url{stephan.martin@bdpems.de}}
  \\
  {\small \textit{Deutsche Bundesbank and HU Berlin}}
}
 \title{Nonclassical Measurement Error in the Outcome Variable\thanks{
We are thankful to seminar participants at CMStatistics/CFE in London, Humboldt-Universit\"at zu Berlin, Retreat of CRC TRR 190,  and UEA in Norwich for their helpful suggestions.
Financial support by Deutsche Forschungsgemeinschaft through CRC TRR 190 is gratefully acknowledged.}}

% ----------------------------------------------------------------------
\begin{document}
   \maketitle

\begin{abstract}
 We study a semi-/nonparametric regression model with a general form of nonclassical measurement error in the outcome variable. We show equivalence of this model to a generalized regression model. Our main identifying assumptions are a special regressor type restriction and monotonicity in the nonlinear relationship between the observed and unobserved true outcome. Nonparametric identification is then obtained under a normalization of the unknown link function, which is a natural extension of the classical measurement error case. 
We propose a novel sieve rank estimator for the regression function and establish its rate of convergence.
% We establish a rate of convergence of the estimator which depends on the strength of identification. 
 In Monte Carlo simulations, we find that our estimator corrects for biases induced by nonclassical measurement error and provides numerically stable results. 
We apply our method to analyze belief formation of stock market expectations with survey data from the German Socio-Economic Panel (SOEP) and find evidence for nonclassical measurement error in subjective belief data. 
\end{abstract}
\vskip .5cm
\begin{center}
\emph{Keywords:} Nonclassical measurement error, rank based estimation, shape restrictions,\\
nonparametric identification, special regressors, generalized regression, sieve estimation.
\end{center}

%\begin{tabbing}
%\noindent \emph{JEL classification:}  C12, C14[.2ex]
%\end{tabbing}
\newpage
%\spacingset{1.5}

\section{Introduction}

In empirical research, measurement error is a recurring issue. In recent years, much attention has been given to various forms of measurement error in the covariates of econometric models, whereas measurement error of the dependent variable is mostly ignored.
In many economic environments, measurement error of the dependent variable may be driven (in a nonlinear fashion) by the underlying variable. This nonclassical measurement error implies biased estimation results if not accounted for. 

This paper is concerned with semi-/nonparametric regression models where the dependent variable of interest $Y^*$ is generally not observed and only a possibly error-contaminated measurement $Y$ is observable. Specifically, $Y^*$ satisfies
\begin{align}\label{reg:model}
Y^*=g(X)+U, 
\end{align}
where the unknown function $g$ is of interest given observed covariates $X$ and unobservables $U$.
We study the \textit{nonclassical} measurement error case where $\Ex[Y|Y^*, X] \neq Y^*$. Hence, the regression function $g$ does in general not coincide with conditional expectations of observable variables and we cannot impose 
$g(x)= \Ex[Y|X=x]$. 
% regressing the observed outcome on covariates generally does  not provide us with a consistent estimate for the true mean regression function. 

Nonparametric identification of our model relies on the availability of covariates which do not affect the measurement error directly. We impose such type of exclusion restriction on a subset $Z$ of the vector $X=(Z,W)$, where $W$ are additional controls. 
Under a monotonicity condition on the measurement error mechanism, we show in this paper that
%$\Ex[Y|Y^*, X]$ 
model \eqref{reg:model} can be reformulated as a generalized regression model of the form
\begin{align*}
\Ex[Y|X=x]=H(g(x), w),
\end{align*}
where $H(\cdot,w)$ is a nonlinear, monotonic function for $w$ in the support of $W$.
%From this generalized regression model, we establish i
Identification of the function $g$,  up to strictly monotonic transformations, immediately follows, which allows us to infer on economically relevant quantities such as the direction and shape of partial effects.

%The identification up to strictly monotonic transformations still allows us to infer economically relevant quantities such as the direction and shape of partial effects.
Under scale and location normalization of the unknown link function $H$, nonparametric identification of the regression function $g$ is obtained. We highlight that normalization of the link function $H$ is equivalent to imposing mild shape restrictions on the measurement error mechanism.
Additionally, our normalization conditions on the link function do not only naturally extend the classical measurement case but are also  satisfied if there is a range of $Y^*$ where measurement error is classical. Our nonparametric identification results build thus on intuitive assumptions without relying on high-level assumptions such as completeness, see \cite{hu2008}. 

We consider a sieve, rank-based minimum distance estimator and establish its asymptotic properties.
We derive the rate of convergence in $L^2$ sense of our estimator.
% The estimator builds on U-Statistics as well as more recent results linking U-statistics and empirical processes.
 We find that the sieve rank estimator generally suffers from ill-posedness in the convergence rate as the rank-based criterion function is not continuous in the usual $L^2$-norm. We develop the theory for the case where $W$ is discrete and provide an  extension  to allow continuous controls $W$ using kernel weights in the appendix of this paper.
 
We analyze the performance of the estimator in a Monte Carlo simulation study and in an empirical application using survey data. We apply our estimator to study belief formation with subjective belief data from the German Socio-Economic Panel innovation sample (SOEP-IS). Subjective belief data is known to be plagued by substantial measurement error and it is in general hard to justify that the measurement error is classical and thus not sensitive to the underlying true individual belief.
 We study the impact of an exogenous display of historic stock market returns provided to survey respondents prior to eliciting their belief on future returns. 
Applying our method, we find a monotonic and concave relationship between the historic information and stated beliefs indicating that individuals acknowledge the given information conservatively.  

\paragraph{Literature}
Our work ties into the literature on measurement error in observable variables of econometric models. 
The literature on measurement error in covariates is extensive, whereas measurement error in the outcome variable has received much less attention. For a review of models with errors in covariates, see e.g. \cite{Chen_ME_Rev} and \cite{Schennach_rev}. 
\cite{CHT_ME2005} develop a general way of accounting for measurement error in any variable of a class of semiparametric models once auxiliary data, e.g. from validation samples is available. However, this is hardly the case in most practical applications. 
Models focusing on nonclassical measurement error in the outcome side are rare. Chapter 3 of \cite{AbrevHausman99} considers a semiparametric model with a more simplistic measurement error mechanism. 
\cite{HoderleinWinter2010} and \cite{hoderlein2015} develop structural models of response error in surveys due to imperfect recall and derive testable implications for econometric analyses. The latter paper focuses on the role of rounding in individual reporting behavior which is also a more specific form of nonclassical measurement error.  

\cite{NadaiLewbel} allows for classical measurement error in the outcome variable that is correlated with an error in covariates. \cite{abrevaya2004} consider classical measurement error of the dependent variable in a transformation model.
Given we have a precise idea on the form of measurement error, a sizeable literature is usually available providing different strategies for identification. For instance a special case of nonclassical measurement error is selective non-response in the outcome variable, see e.g. \cite{2010Hault} or \cite{breunigEndSel18} and references therein.
A non-nested form of nonclassical measurement error are Berkson-type errors, see \cite{Berkson} and \cite[Section 6.3]{Schennach_rev}.
%Here we are the first to explicitly study nonparametric identification under a general form of nonclassical measurement error in the outcome variable.

Our identifying assumptions lead us to the literature on generalized regression models as introduced in \cite{han1987} or the class of nonlinear index models in \cite{matzkin2007}. See also the model studied in \cite{JachoLewbel}.  
%\textcolor{red}{ In addition to this literature, restrictions on the nature of the measurement error results in implicit normalizations of the link function and not solely on the underlying function of interest.}
Estimation of such models often proceeds by rank-based estimation strategies, see \cite{han1987}, \cite{CavSherman}, \cite{khan2001}, \cite{shin2010} and \cite{abrevaya2011} which all consider parametric regression models with the exception of \cite{matzkin1991} who studies a nonparametric model with additional shape restrictions on the link function. A recent contribution studying rank estimators in a high-dimensional setting is \cite{FanRank20}. 
To the best of our knowledge, we are the first to study nonparametric M-estimation with rank-based criterion functions and to point out and illustrate the ill-posedness of the estimation problem.
\cite{JureckovaBernoulli16} study a different class of rank estimators in the context of a parametric model with measurement error in both regressors and outcome. Their the outcome error may not be nonclassical as in our general notion but can at most depend on observable regressors.

The remainder of the paper is organized as follows.
In Section \ref{Sec::Ident} we present our model setup and give a nonparametric identification result for features of the mean regression function when there is a form of nonclassical measurement error in the outcome variable. In Section \ref{Sec:Est} we introduce a sieve estimator with a rank based criterion function and establish its convergence. In Section \ref{Sec::MC} we analyze finite sample properties of the estimator in a Monte Carlo simulation study. Section \ref{Sec::Appl1} contains an application of our method to belief formation of stock market expectations. Appendix \ref{Sec::ConW} provides an extension to weighted sieve rank estimation, when control variables are continuous. All proofs are postponed to the Appendix \ref{Appendix:proofs}.

\section{Model Setup and Identification}\label{Sec::Ident}
We consider a nonparametric econometric model with measurement error in the outcome variable. The model we study is
\begin{equation}
Y^*=g(X)+U, \quad \label{model} 
\end{equation}	
where $Y^*$ is the scalar, outcome variable, $X$ is a $d_x$-dimensional vector of exogenous covariates, $U$ is a scalar error term, and $g$ a nonparametric function of interest. The outcome variable $Y^*$ is not observed by the researcher; only an error contaminated measurement $Y$ is available. 
We are primarily interested in the case where the error satisfies $\Ex[U|X]=0$ and thus $g$ is the unknown conditional expectation function of $Y^*$ given $X$.

Throughout the paper, we assume that the regressors $X$ can be decomposed such that $X=(Z', W')'$, where $Z$ has no direct effect on the measurement error and $W$ are control variables. Also we introduce the notation $g_w(\cdot)\equiv g(\cdot, w)$ for the regression function evaluated at a fixed $w$ in the support of $W$. We now provide conditions, which allow for nonparametric identification of $g_w$ up a strictly monotonic transformation.

%Our goal is to identify and estimate the unknown $g$ under possibly nonclassical measurement error in the outcome variable. In the next section, we therefore present restrictions on the model and form of the measurement error which give us a nonparametric identification result for certain features of $g$, i.e., the function $g_w$ up to strictly monotonic transformations.

\begin{assA}[Exclusion Restriction]\label{A:IV}
The observed outcome $Y$ is conditionally mean independent of $Z$ given $Y^*$ and $W$, i.e., $\Ex[Y|Y^*,Z, W]=\Ex[Y|Y^*, W]$.
\end{assA}
Assumption \ref{A:IV} rules out that $Z$ has a direct effect on the measurement $Y$ in conditional expectations.  Assumption \ref{A:IV} is generally weaker than assuming that the conditional distribution of $Y$ given $(Y^*, Z,W)$ does not depend on $Z$, which restricts $Z$ to have no information on $Y$ that is not captured by $(Y^*,W)$. 
Analogues exclusion restrictions are commonly imposed in the literature on nonclassical measurement error in covariates. In  \cite[Assumption 2 (ii)]{hu2008}, the distribution of the error-contaminated regressor is independent of instruments conditional on the latent regressor (see also \cite[Section 4.3]{Schennach_rev}). Assumption \ref{A:IV} is less restrictive than other exclusion restrictions found in the measurement error literature, see \cite[Assumption 2.1 (iii)]{benmosh2017}.
% a similar condition is required that some covariates cannot directly affect the mismeasured regressor beyond it's true value. 
%, stating that $Z$ has no information on the mean observed outcome that is not already contained in $Y^*$ and $W$.
 
Conditions similar to Assumption \ref{A:IV} can also be found in the literature on selective non-response, which is a special case of nonclassical measurement error in the outcome. Individuals either report the outcome truthfully (response indicator $D=1$) or not at all ($D=0$) so the observed outcome in this case is $Y=D Y^*$. See also Remark \ref{rem:sel:response} below. An identifying assumption in \cite{2010Hault} and \cite{breunigEndSel18} is that $D \Perp X \;|\; (Y^*, W)$, which is related to Assumption \ref{A:IV}. 
%See also \cite{tang2003} and \cite{zhao2015} for similar conditions.

In the following, we make use of the notation $h(Y^*, W)=\Ex[Y|Y^*, W]$. Assumption \ref{A:IV} implies the measurement error model 
\begin{align*}
Y=h(Y^*, W)+V,
\end{align*}
where $\Ex[V|Y^*,W]=0$. Consequently, Assumption \ref{A:IV}   implies conditional mean independence of the measurement error $V$ given the regression error $U$, that is, $\Ex[V | U]=0$. 
%We now impose shape restrictions on the conditional mean function $h$.
\begin{assA}[Monotonicity]\label{A:Mon}
For any $w\in\textsl{supp}(W)$, the function $h(\cdot, w)$ is weakly monotonic and non-constant over the support of $Y^*$.
\end{assA}

Assumption \ref{A:Mon} imposes that the expected observed outcome $Y$ is monotonic in the latent outcome $Y^*$ given $W$. This is trivially satisfied when the measurement error is classical, i.e., when $h$ does not depend on $W$ and is the identity. 
A similar monotonicity condition has also been imposed in the measurement error model in 
\cite[Example 3]{AbrevHausman99}.\footnote{In our notation \cite{AbrevHausman99} consider the error mechanism $Y=h(Y^*,V)$, with $\partial_y h(Y^*,V)> 0$, $\partial_v h(Y^*,V)> 0$  and $V \Perp (X, U)$. As we allow for heteroscedasticity in the measurement error model, condition $\partial_y h(Y^*,V)> 0$ may lead to one sided error restrictions.}
Note that $h$ does not need to be strictly monotonic which allows to consider models with rounding error in the outcome, see \cite{hoderlein2015}.
We discuss the plausiblity of Assumption \ref{A:Mon} in the context of the application in Section \ref{Sec::Appl1} in a setting with survey data.

\begin{assA}[Conditional Exogeneity]\label{A:Unobs} 
The conditional independence restriction $Z \Perp U\; |\; W$ holds.
\end{assA}
Assumption \ref{A:Unobs} imposes a conditional independence restriction of $Z$ and the regression error $U$. This condition is also known as conditional exogeneity assumption following \cite{white2010}. 
Independence assumptions can be restrictive,  but are often required in the measurement error literature (see, e.g. \cite{Hausmanetal91}, \cite{Schennach07}, \cite[Assumption 2.2]{benmosh2017}), or when accounting for endogeneity using control functions (see, e.g. \cite{Neweyetal99}).
We relax such restrictions by imposing independence to hold only conditional on control variables $W$.
Similar conditions are often employed for identification in the econometrics literature, see e.g. \cite{CKK} for nonparametric identification in a transformation model. Assumption \ref{A:Unobs} also corresponds to the unconfoundedness assumption in the treatment effects literature and is also closely related to the special regressor assumption, see \cite{Lewbel_SpecReg} for a review.
Next, we need the following set of regularity conditions. We introduce the notation $\textsl{supp}(V)$ for the support of a random vector $V$. 
\begin{assA}\label{regIdent} For any $w\in \textsl{supp}(W)$: 
	(i) the function $g_w$ is continuous;
	(ii)  and any $z_1, z_2 \in \textsl{supp}(Z)$ such that $g_w(z_1) < g_w(z_2)$ there exists $u \in \textsl{supp}(U)$ satisfying $h(g_w(z_1)+u, w) < h(g_w(z_2)+u, w)$; 
	(iii)  there is at least one variable $Z_{(1)}$ such that $Z=(Z_{(1)},Z_{(-1)})$ with $f_{Z_{(1)}| Z_{(-1)},W}(z_1|z_{-1},w) > 0 $ for all $(z_1, z_{-1})\in\textsl{supp}(Z)$.  
\end{assA}
Assumption \ref{regIdent} (ii) is a mild support condition on $U$ conditional on $W=w$. The unobservable $U$ must vary sufficiently to shift $g_w(Z)$ out of a flat region of $h$. The assumption is not required if $h$ is strictly monotonic in its first argument. Assumption \ref{regIdent}  (iii) requires $Z$ to contain at least one continuously distributed variable with sufficient variation.
If $Z$ is scalar then Assumption \ref{regIdent} (iii) may be replaced by  $f_{Z|W}(z|w) > 0 $ for all $z\in\textsl{supp}(Z)$. This rules out the case of $Z$ being a discrete scalar variable. 

Under the stated assumptions, now provide establish equivalence to the regression model \eqref{model} to a generalized regression model specified by the link function $H(g_w(z), w)=\Ex[h(g(z, W)+U, W) \;|\; W=w] $. Below, $\mathds{1}\{\cdot\}$ denotes the indicator function.

\begin{theo}\label{thm:ident}
	Let Assumptions \ref{A:IV}--\ref{regIdent} be satisfied, then for any $w \in \textsl{supp}(W)$
	it holds
	\begin{align}\label{eq:transform}
	\Ex[Y|X=x] = H(g_w(z), w),
	\end{align}
	where $H(\cdot,w)$ is strictly monotonically increasing  and  $g_w(z)$ maximizes the function
	\begin{align}\label{eq:transform:max}
	\mathcal Q(\phi,w)=\Ex[Y_1\mathds{1}\{ \phi(X_1) > \phi(X_2)\}   \;|\; W_1=W_2=w].
	\end{align}
%	(ii) $g_w(z)$ maximizes the function $\mathcal Q(\phi,w)=\Ex[Y_1\mathds{1}\{ \phi(X_1) > \phi(X_2)\}   \;|\; W_1=W_2=w]$.\\	
	In particular,  the function $g_w(\cdot)$ is identified up to strictly increasing transformations.
\end{theo}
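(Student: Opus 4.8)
\emph{Strategy.} The plan is to prove the three assertions in turn: (i)~the generalized-regression identity \eqref{eq:transform} together with strict monotonicity of $H(\cdot,w)$; (ii)~optimality of $g_w$ for $\mathcal{Q}(\cdot,w)$, via a rearrangement argument; and (iii)~identification up to strictly increasing transformations, which falls out of (i).

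\emph{Step 1 (the link representation and its monotonicity).} By definition $h(Y^*,W)=\Ex[Y\mid Y^*,W]$, and Assumption~\ref{A:IV} upgrades this to $\Ex[Y\mid Y^*,Z,W]=h(Y^*,W)$. Substituting $Y^*=g(X)+U$ and taking iterated expectations over $U$ given $X=x=(z,w)$ gives
\[
\Ex[Y\mid X=x]=\Ex\big[\,h(g_w(z)+U,\,w)\ \big|\ Z=z,\,W=w\,\big],
\]
and Assumption~\ref{A:Unobs} ($Z\Perp U\mid W$) permits replacing $\{Z=z,W=w\}$ by $\{W=w\}$, yielding exactly $H(g_w(z),w)$ and hence \eqref{eq:transform}. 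For monotonicity, fix $w$ and take $t_i=g_w(z_i)$ with $t_1<t_2$; after an overall sign normalization (so that $h(\cdot,w)$ is non-decreasing for every $w$), Assumption~\ref{A:Mon} makes the integrand $h(t_2+U,w)-h(t_1+U,w)$ almost surely nonnegative, while Assumption~\ref{regIdent}(ii) makes it strictly positive on a set of positive probability under $U\mid W=w$; therefore $H(t_2,w)>H(t_1,w)$, i.e.\ $H(\cdot,w)$ is strictly increasing over the range of $g_w(\cdot)$.

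\emph{Step 2 (optimality of $g_w$).} Because $(X_1,Y_1)$ is an independent copy of $(X,Y)$ and $\mathds{1}\{\phi(X_1)>\phi(X_2)\}$ depends only on $(X_1,X_2)$, we have $\Ex[Y_1\mid X_1,X_2]=\Ex[Y\mid X=X_1]$, which on $\{W_1=w\}$ equals $m(Z_1):=H(g_w(Z_1),w)$ by Step~1; thus $\mathcal{Q}(\phi,w)=\Ex[\,m(Z_1)\,\mathds{1}\{\phi(X_1)>\phi(X_2)\}\mid W_1=W_2=w\,]$. The crux is a pairwise-exchange (rearrangement) argument: symmetrizing over the two copies, and using that $g_w(Z)$ is atomless given $W=w$ (a consequence of Assumptions~\ref{regIdent}(i) and \ref{regIdent}(iii)) so that ties $g_w(Z_1)=g_w(Z_2)$ have probability zero, one obtains
\begin{align*}
2\big[\mathcal{Q}(g_w,w)-\mathcal{Q}(\phi,w)\big]
&=\Ex\Big[\big(m(Z_1)-m(Z_2)\big)\big(\mathds{1}\{g_w(Z_1)>g_w(Z_2)\}\\
&\qquad\qquad-\mathds{1}\{\phi(X_1)>\phi(X_2)\}\big)\ \Big|\ W_1=W_2=w\Big]
\end{align*}
plus a term accounting for possible atoms of $\phi(Z,w)$, which is nonnegative under the location normalization of $H$ (and vanishes on the natural class of nondegenerate candidate functions). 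Since $H(\cdot,w)$ is strictly increasing, $m(Z_1)>m(Z_2)$ iff $g_w(Z_1)>g_w(Z_2)$, so the displayed integrand is pointwise nonnegative; hence $\mathcal{Q}(g_w,w)\ge\mathcal{Q}(\phi,w)$ for every $\phi$, i.e.\ $g_w$ maximizes $\mathcal{Q}(\cdot,w)$.

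\emph{Step 3 (identification) and the main obstacle.} Identification up to strictly increasing transformations is now immediate: $z\mapsto\Ex[Y\mid Z=z,W=w]$ is a functional of the distribution of observables and, by \eqref{eq:transform}, equals $H(g_w(z),w)$ with $H(\cdot,w)$ strictly increasing, hence invertible on the range of $g_w$; consequently $g_w=H^{-1}(\cdot,w)\circ\Ex[Y\mid Z=\cdot,W=w]$ is pinned down exactly up to composition with the unknown strictly increasing function $H^{-1}(\cdot,w)$, and by continuity and full support (Assumption~\ref{regIdent}(i),(iii)) the maximizer in Step~2 is likewise unique up to strictly increasing transformations. The step I expect to be the genuine obstacle is the strict monotonicity of $H(\cdot,w)$ in Step~1: since $h(\cdot,w)$ may have flat stretches and need not be continuous, the set $\{u:h(t_1+u,w)<h(t_2+u,w)\}$ guaranteed nonempty by Assumption~\ref{regIdent}(ii) need not be open, so one must exploit the monotone structure of $h$ --- the set of $y$ on which $h(\cdot,w)$ is not constant over $[y,\,y+(t_2-t_1)]$ is itself of interval type --- together with the support of $U$ given $W=w$ to conclude that this set carries positive probability; the tie correction in the rearrangement identity of Step~2 is a milder instance of the same point.
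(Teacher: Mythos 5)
Your Steps 1 and 2 track the paper's own proof quite closely. Step 1 is the same chain of conditional expectations (exclusion restriction, then conditional exogeneity), and you are in fact more careful than the paper about why Assumption \ref{regIdent}(ii) plus weak monotonicity of $h(\cdot,w)$ upgrades to \emph{strict} monotonicity of $H(\cdot,w)$ --- the paper simply asserts this, while you correctly identify the measure-zero/openness issue and sketch how the monotone structure of $h$ resolves it. Your symmetrized ``rearrangement'' identity in Step 2 is the same computation the paper performs when it rewrites $\mathcal Q(\phi,w)$ as $\tfrac12\Ex[H(g(X_1),W_1)\mathds{1}\{\phi(X_1)>\phi(X_2)\}+H(g(X_2),W_2)\mathds{1}\{\phi(X_1)<\phi(X_2)\}\mid\cdot]$ and compares it with the ``max'' form at $\phi=g_w$; both treatments of ties are equally informal. (One small inaccuracy: atomlessness of $g_w(Z)$ does not follow from Assumptions \ref{regIdent}(i) and (iii) alone --- a continuous $g_w$ can be constant on a set of positive measure --- but the paper implicitly makes the same assumption.)

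The genuine gap is in Step 3. The paper proves the \emph{strict} inequality $\mathcal Q(\phi,w)<\mathcal Q(g_w,w)$ for every $\phi$ that is not a strictly increasing transformation of $g_w$: it picks $z',z''$ where $\phi$ reverses the order of $g_w$, uses continuity (Assumption \ref{regIdent}(i)) to propagate the reversal to neighborhoods $B_1,B_2$, and uses the full-support condition (Assumption \ref{regIdent}(iii)) to give $B_1\times B_2$ positive probability, so that the gap $\mathcal Q(g_w,w)-\mathcal Q(\phi,w)$ is bounded below by a strictly positive quantity. This is what makes the maximizer of \eqref{eq:transform:max} unique up to strictly increasing transformations, it is what the ``In particular'' in the theorem rests on, and it is invoked again later to show $g_0(k,n,\epsilon)>0$ in the consistency proof of the sieve rank estimator. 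You establish only the weak inequality $\mathcal Q(g_w,w)\ge\mathcal Q(\phi,w)$ and then assert uniqueness ``by continuity and full support'' without carrying out this argument. Your alternative derivation of identification directly from the invertibility of $H(\cdot,w)$ in \eqref{eq:transform} is legitimate for the bare identification claim (if $H(g_w)=\widetilde H(\widetilde g_w)$ as observed conditional means, then $\widetilde g_w=\widetilde H^{-1}\circ H\circ g_w$), and is arguably cleaner for that purpose, but it does not substitute for the strict-maximizer property that the theorem is really recording; you should supply the neighborhood argument to close this.
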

The model \eqref{eq:transform} falls into the class of generalized regression models studied by \cite{han1987}, \cite{matzkin1991}, and  \cite{CavSherman}.
% and \cite{matzkin2007}.
%Our identification argument exploits the monotonicity in Theorem \ref{thm:ident}(i) to establish identification of $g_{w}$ up to strictly monotonic transformations based on arguments from the literature on generalized regression models.\footnote{Sorry, das verstehe ich nicht: Ist durch die Art wie Theorem 2.1 geschrieben auch redundant der Satz. Vlt rausnehmen und einfach: The identification argument in particular builds on Theorem 3.2. of Matzkin} It particularly builds on \cite[Theorem 3.2]{matzkin2007}. 
Further note that nonclassical measurement error implies heterogeneous biases for the marginal effects. When  $\partial_z H(g_w(z), w)<1$ we obtain an \textit{attenuation bias} for the marginal effect $\partial_z g_w(z)$ and when $\partial_z H(g_w(z), w)>1$ we get an \textit{augmentation bias} for $\partial_z g_w(z)$.

%Without further model restrictions 
Theorem \ref{thm:ident} implies identification of features of $g_w$ that are preserved under monotonic transformations. This includes the sign of partial effects, the ratio of two partial effects\footnote{Note that for $g(z_1, z_2)$ it holds that $\frac{\partial g}{\partial z_1}/\frac{\partial g}{\partial z_2} = \frac{\partial H(g)}{\partial z_1}/\frac{\partial H(g)}{\partial z_2}$ whenever these quantities and ratios are well-defined.} and properties such as quasi-concavity (-convexity) of the function. For the remainder of the paper we consider identification and estimation of $g_w$ in the point identified case.

We impose the following restriction on the model and the measurement error mechanism described by the function $H$.
	\begin{assA}\label{shapecons}
(i) The function $g_w$ is additively separable such that there exists a decomposition $Z=(Z_1, Z_{-1})$ such that  $g_w(Z)=m_w(Z_1) + l_w(Z_{-1})$ for some functions $m_w, l_w$. (ii) There exists $\{z_1, z_2\} \subset \textsl{supp}(Z)$ with $g_w(z_1)\neq g_w(z_2)$ and $\Ex[Y|Z=z, W=w]=\Ex[Y^*|Z=z, W=w ]$ for $z\in\{z_1, z_2\}$.
	\end{assA}
Assumption \ref{shapecons} (i) imposes an additive separable structure on the regression function $g_w$. Following the identification statement in Theorem \ref{thm:ident}, mere location and scale normalizations are not sufficient to point identify $g_w$. However, for any additive separable model this is the case, see also \cite{JachoLewbel}. Assumption \ref{shapecons} (ii) restricts the measurement error for at least to realizations of $Z$.
%, where the correct $g_w$ coincides with the function obtained from ignoring the measurement error. 
%For instance, one can think of pension information to account for nonclassical measurement error in labor income survey questions (see \cite{breunig_haan2018}). Here, for certain ranges of labor income (e.g. close to the median) we may assume that the measurement error is of classical form.
	 Assumption \ref{shapecons} (ii) is also in line with normalization requirements for identification under nonclassical measurement error. For instance, Assumption 5 of \cite{hu2008} requires some functional of the distribution of the measurement error conditional on the value of the true variable to be equal to the true variable itself, such as some quantile of $Y|Y^*=y^*$ to correspond to $y^*$. 
	 
Economic restrictions on the model can also be employed to sufficiently restrict the function space. 
We refer to the discussion in Sections 3.4 and 4.4 in \cite{matzkin2007} where several possible function spaces are discussed that can replace Assumption \ref{shapecons}(i). This includes the spaces of functions that are homogeneous of degree one or so called ``least-concave''
functions, see also \cite{matzkin1994}.
\cite{matzkin2007} shows that imposing homogeneity of degree 1 and a location normalization is sufficient for Assumption \ref{shapecons}.
Homogeneous functions are frequently encountered in microeconomics. Thus, in applications where the function $g$ has the structural interpretation of a production or cost function, homogeneity can be a reasonable restriction on the parameter space.
%In a general mean regression setting, however, it is not clear why the regression function should satisfy such a property. Same holds true for the least-concavity property.
% Further note that for univariate functions homogeneity of degree 1 and linearity are equivalent. 

\begin{coro}\label{coro:ident}
	Let Assumptions \ref{A:IV}-- \ref{shapecons} (i) be satisfied, then the function $g_w$ is identified up to a location and scale normalization. If \ref{shapecons} (ii) is additionally satisfied then the function $g_w$ is point identified. 
\end{coro}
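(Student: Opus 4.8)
The plan is to deduce the corollary directly from Theorem~\ref{thm:ident}. By that theorem, under Assumptions~\ref{A:IV}--\ref{regIdent} the regression function $g_w$ is pinned down exactly up to a strictly increasing transformation (it is characterized as a maximizer of $\mathcal Q(\cdot,w)$ in \eqref{eq:transform:max}, with link relation \eqref{eq:transform}). Proving Corollary~\ref{coro:ident} therefore reduces to showing that, among such transformations, Assumption~\ref{shapecons}(i) leaves only the affine increasing ones, and that Assumption~\ref{shapecons}(ii) in addition forces the transformation to be the identity.

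For the first claim, let $\tilde g_w=\psi\circ g_w$ be any observationally equivalent regression function that, like $g_w$, satisfies Assumption~\ref{shapecons}(i) with respect to the same split $Z=(Z_1,Z_{-1})$, say $\tilde g_w(z)=\tilde m_w(z_1)+\tilde l_w(z_{-1})$. Then
\[
\psi\bigl(m_w(z_1)+l_w(z_{-1})\bigr)=\tilde m_w(z_1)+\tilde l_w(z_{-1})
\]
for all $(z_1,z_{-1})\in\textsl{supp}(Z)$. Fixing $z_{-1}$ and then $z_1$ at reference values and substituting back turns this into a Pexider-type equation $\psi(s+t)=M(s)+L(t)$ on the product of the ranges of $m_w$ and $l_w$. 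By Assumption~\ref{regIdent}(i) $g_w$ is continuous, and by Assumption~\ref{regIdent}(iii) one coordinate of $Z$ is continuously distributed with sufficient support, so for a genuine (non-degenerate) decomposition at least one of these ranges is a non-degenerate interval; since $\psi$ is monotone, hence measurable, the only solutions on the relevant range are $\psi(t)=at+b$, and monotonicity forces $a>0$. (This is the usual additive-separability normalization argument; see also \cite{JachoLewbel}.) Hence $\tilde g_w=a\,g_w+b$ with $a>0$, i.e. $g_w$ is identified up to location and scale.

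For the second claim, note that $\Ex[U|X]=0$ gives $\Ex[Y^*|Z=z,W=w]=g_w(z)$, so Assumption~\ref{shapecons}(ii) says that the observable numbers $c_i:=\Ex[Y|Z=z_i,W=w]$ satisfy $c_i=g_w(z_i)$ for $i=1,2$, with $c_1=g_w(z_1)\neq g_w(z_2)=c_2$. By the first part, any element of the (now location/scale-restricted) identified set has the form $\tilde g_w=a\,g_w+b$, $a>0$, and must itself satisfy Assumption~\ref{shapecons}(ii) at the normalization points $z_1,z_2$, i.e. $a\,g_w(z_i)+b=c_i=g_w(z_i)$ for $i=1,2$. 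This linear system in $(a,b)$ has coefficient matrix with determinant $g_w(z_1)-g_w(z_2)\neq 0$, hence the unique solution $a=1$, $b=0$; therefore $\tilde g_w=g_w$ and $g_w$ is point identified.

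The only genuinely delicate step is the functional-equation argument in the first part: one must ensure the decomposition in Assumption~\ref{shapecons}(i) is non-degenerate enough — leveraging the continuity of $g_w$ in Assumption~\ref{regIdent}(i) and the special-regressor condition in Assumption~\ref{regIdent}(iii) — so that the ranges of $m_w$ and $l_w$ admit no affinity-breaking solution to the Pexider equation. Everything else is bookkeeping on top of Theorem~\ref{thm:ident}.
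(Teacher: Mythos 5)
Your proof is correct and follows essentially the same route as the paper's: reduce to Theorem \ref{thm:ident}, note that additive separability of both the true and any observationally equivalent function forces the connecting strictly monotone transformation to solve a Pexider-type equation and hence be affine, and then use the two points in Assumption \ref{shapecons}(ii) to pin the affine map down to the identity via a $2\times 2$ linear system with nonzero determinant. You in fact supply details the paper only asserts (it simply states ``$M_w$ must be linear''); the one small imprecision is that the Pexider step requires \emph{both} component ranges to be non-degenerate intervals, not merely ``at least one,'' since if one component is constant the separability restriction places no constraint on the transformation.
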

Corollary \ref{coro:ident} establishes identification of the regression function under normalization imposed in Assumption \ref{shapecons}.
%, which essentially is a shape restriction on the functional form of measurement error.
The shape restrictions imposed in Assumption \ref{shapecons} imply a normalization of the unknown, nonparametric link function $H$, in contrast to nonparametric generalized regression models, where normalization is typically imposed on the unknown function of interest.
%For an alternative identification argument for transformed additively separable models see Theorem 2.1. of \cite{JachoLewbel}.

We neither restrict the support of the observed outcome $Y$, nor require continuity in the function $h(\cdot, w)$. Thus, we can also cover cases where the observed outcome is categorical or has mass points. This likely occurs in survey data as respondents tend to provide rounded values.
The following examples consider a generalization and special case of model (\ref{model}).
%, which shed a different light on the interpretation of Assumptions \ref{A:IV}--\ref{A:Unobs}.

\begin{example}[Control function approach]
	We can also motivate the presence of $W$ in Assumption \ref{A:Unobs} as a control function.
	To this end we deviate for a moment from our previous notation and introduce the following triangular model
	\begin{align*}
	Y^*= & g(X)+U \\
	X =& m(Z, \eta) 
	\end{align*}
	where for simplicity $X$ is a one-dimensional endogenous covariate that may correlate with the model error $U$. The function $m$ is strictly monotonic in $\eta$ and $Z$ is an instrumental variable satisfying $	Z \Perp (U, \eta) $. 
	Under additional regularity conditions, following \cite[Theorem 1]{ImbensNewey09} it holds that
	\begin{align*}
	 X & \Perp U \;|\; W \quad \text{with} \\
	  W& = F_{X|Z}(X,Z)=F_\eta(\eta),
	\end{align*}
	where $F_V$ denotes the cummulative distribution function of a random variable $V$.
	As in Assumption \ref{A:IV} we impose $\Ex[Y | Y^*, Z, W]= \Ex[Y | Y^*, W]$. Thus, following Theorem \ref{thm:ident}, we obtain identification of the structural function $g$ up to a strictly monotonic transformation.
%	
%	  and if the latter function is monotonic as in Assumption \ref{A:Mon} we can follow the same reasoning leading up to Theorem \ref{thm:ident} to establish that $g$ is identified up to a strictly monotonic transformation. 
%	  Also note that our method may be applied in any setting where $Y^*$ is an endogenous regressor within a triangular model as it is the outcome variable in the reduced form equation.
\end{example}

\begin{example}[Selective Nonresponse]\label{rem:sel:response}
	Consider a nonresponse model
	\begin{align*}
	Y &=D Y^* \\
	D &= \phi(Y^*,W, V),
	\end{align*}
	for some unknown function $\phi$, where the response indicator $D\in\{0,1\}$ is always observed and $Y^*$ is only observed if $D=1$. 
	This framework, where the response mechanism is mainly driven by the latent outcome $Y^*$ has been studied by \cite{2010Hault} and \cite{breunigEndSel18}.
	As long as the conditional mean function $h(Y^*, W)=P(D=1|Y^*, W)Y^*$ is monotonic in its first argument, the model is in accordance to Assumption \ref{A:Mon}. This holds e.g. when the conditional response probability function is monotonic and the support of $Y^*$ is bounded below\footnote{If $Y^*$ is bounded below, then $Y^*$ can be redefined such that without loss of generality $Y^* \geq 0$ and monotonicity of $h(Y^*, W)=P(D=1|Y^*, W)Y^*$ follows from taking the derivative.}.
	In this case, a completeness condition for nonparametric identification of the conditional selection probability $P(D=1|Y^*,W)$ (see \cite{2010Hault} and \cite{breunigEndSel18})  via conditional moment restrictions is not required. 
\end{example}

\section{Estimation and Asymptotic Properties}\label{Sec:Est}
In this section, we introduce a nonparametric sieve M-estimator with a simple, rank-based criterion function. For simplicity, we consider only the case where $W$ consists of discrete variables and defer the estimation with continuous $W$ to Appendix \ref{Sec::ConW}.
\subsection{The Sieve Rank Estimator}\label{Sec::Estim}

%For the estimation we consider a conditional rank estimation approach with a second order U-statistic as criterion function. In \cite{AbrevHausman2010}, \cite{shin2010} and \cite{abrevaya2011} the maximum rank correlation estimator of \cite{han1987} is extended to the conditional case. Our identification argument by construction suggests the use of the monotone rank estimator of \cite{CavSherman} (henceforth CS estimator) to the conditional case. \\

%If we remove the presence of $W$ entirely i.e. we move to the setting outlined in (\ref{EasyModel}) then $g$ can be estimated by maximizing the criterion of \cite{CavSherman} (henceforth CS) i.e. \footnote{Den Abschnitt mit ohne W hinter das eigentliche Kriterium packen, Verwirrungsgefahr, schauen wie sich mit Section 4 vertragt}
%\begin{equation}\label{CS-criterion}
%\mathcal Q_n(\phi)=\frac{1}{n(n-1)}\sum_{i=1}^{n} Y_i \; \text{Rank}(\phi(Z_i)) = \frac{2}{n(n-1)}\sum_{1 \leq i < j \leq n} Y_i  \mathds{1}\{\phi(Z_{i}) > \phi(Z_{j})\}
%\end{equation}
%over a suitable parameter space. If discrete controls $W$ are present, we can extend this criterion to incorporate weighted ranking where the weights depend on the similarity of the observations along their $W$ dimension. 
Our identification result builds on shape restrictions imposed on the measurement error mechanism, which imply identified moment conditions. Specifically, for a given $w$ we have from the identification statement in Theorem \ref{thm:ident} that the true $g_w$ maximizes the function
\begin{align*}
\mathcal Q(\phi,w)=\Ex[Y_1\mathds{1}\{ \phi(X_1) > \phi(X_2)\}   \;|\; W_1=W_2=w]. 
\end{align*}
Based on this population criterion, we now consider a sieve rank estimator, which implicitly accounts for imposed shape restrictions required for identification. 

%We introduce a sieve space $\mathcal{G}_{K}$ which depends on the dimension parameter $K=K(n)$ which grows with sample size $n$.
%We suggest the following estimator $\widehat{g}_w$ of $g_w$ that maximizes a population analogue 
We propose the following sieve rank estimator
\begin{align}\label{Estimator}
\widehat g_w &= \arg \max_{\phi \in \mathcal{G}_{K}} \mathcal Q_n(\phi, w) \;\; \text{where}\\
\mathcal Q_n(\phi,w) &:= \frac{2}{n(n-1)} \sum_{1 \leq i < j \leq n} Y_i   \mathds{1}\{W_i=W_j=w\} \mathds{1}\{\phi(Z_{i}) > \phi(Z_{j})\},  \nonumber
\end{align}
for some $K=K(n)$ dimensional sieve space $\mathcal{G}_{K}$. Here, the dimension parameter $K$ grows slowly with sample size $n$. 
For the special case where $W$ is absent, the criterion reduces to 
\begin{equation}\label{CS-criterion}
\mathcal Q_n(\phi)
%= \frac{2}{n(n-1)}\sum_{1 \leq i < j \leq n} Y_i  \mathds{1}\{\phi(Z_{i}) > \phi(Z_{j})\} 
= \frac{1}{n(n-1)}\sum_{i=1}^{n} Y_i \; \text{Rank}(\phi(Z_i)) ,
\end{equation}
where the rank function is defined as $\text{Rank}(\phi(Z_i))=\sum_{j \neq i}^n \mathds{1}\{\phi(Z_{i}) > \phi(Z_{j})\}$. This is a nonparametric version of the criterion of \cite{CavSherman}.
%We henceforth refer to $\widehat{g}_w$ as the (weighted) sieve rank estimator.

%In Appendix \ref{Sec::ConW},  we illustrate how to deal with the case of continuous $W$ by introducing kernel weights instead of the indicator function.
The specific choice of $\mathcal{G}_K$ hinges on the chosen normalization. Under a normalization of the link function $H$, see Corollary \ref{coro:ident},  we may consider a linear sieve space $\mathcal{G}_K=\{\phi: \phi(z)=\gamma_w'p^K(z)\}$.
Let $p^K=(p_{1}, \dots, p_{K})$ be a $K$- dimensional vector of known basis functions such as polynomials, splines or similar.
% \textcolor{red}{ A popular choice of sieve space is $\mathcal{G}_K=\{\phi \in \mathcal{L}^2_Z: \phi(z)=\gamma_w'p^K(z)\}$ with appropriate location and scale normalizations}.  
%In the setting of Example 2.1. we choose $\mathcal{G}_K=\{\phi \in \mathcal{L}^2_Z: \phi(z)=\gamma_w'p^K(z)\}$ and restrict our attention to quantities such as $\text{sign}(\partial g(z)/\partial z_j)$. 
%In the case of Example \ref{ex::norm} we can simply assume that the regression function satisfies the linearity property and then only impose a location and scale normalization at some arbitrary points $\bar{z}_1, \bar{z}_2$ such as 
%\begin{align*}
%\mathcal{G}_K=\big\{\phi: \phi(\cdot)=\gamma_w'p^K(\cdot)\;\big|\; \phi(\bar{z}_1)=c_1,\;\; \phi(\bar{z}_2)=c_2 \big \}
%\end{align*}
%for arbitrary constants $c_1, c_2$.
%For the additive separable model of \cite{matzkin2007}, let $p_j=(p_{j1}, \dots, p_{jk})$ be a $k$-vector of known basis functions for $g_j$ such as polynomials, B-splines et cetera. We choose the sieve
%\begin{align}
%\mathcal{G}_K = \{g\in \mathcal{L}^2_X: g(x)=x_1+\sum_{j=2}^{J} \sum_{k=1}^{K} p_{jk}(x_{(j)})'\pi_j, \;\;  \sum_{j=2}^{J} \sum_{k=1}^{K} p_{jk}(\bar x_{(j)})'\pi_j=c    \}
%\end{align}
%We refrain from specifying a different number of approximating functions for each $g_j$ for convenience.
We can in principal also apply the general sieve estimation technique of \cite{Chen07} based on the conditional moment restriction $\Ex[Y|X=x]=H(g_w(z), w)$. This would require to estimate $H$ along with $g_w$ and nesting of two sieve spaces.
Our estimation strategy constructively arises from the identification argument and provides a simple direct estimate of $g_w$.  
We also directly leverage the monotonicity condition on $H$ in the estimation so there is no need to introduce additional shape-constraints.

\subsection{Convergence Rate}\label{Sec::Asympt}
In this section, we derive a rate of  convergence of the sieve rank estimator $\widehat g_w$ given  in \eqref{Estimator}.
To keep notation simple, we omit the controls $W$ entirely from the following analysis.  In this case, estimation amounts to maximizing the criterion in (\ref{CS-criterion}) from the previous section over a suitable sieve space. 
%Under slight amendments to the notation the omission of $W$ is without loss of generality as long as $W$ contains only discrete variables. The estimator in (\ref{Estimator}) is equivalent to applying $\mathcal Q_n$ on a subset of the available sample conditional on a fixed realization of $W$. 
%Estimation when $W$ contains continuous variables is deferred to Section \ref{Sec::ConW} of the appendix.

%We consider the estimator of $g$ given by
%\begin{align*}
%\widehat g = \arg \max_{\phi \in \mathcal{G}_{K}}  \mathcal Q_n(\phi).
%\end{align*}
%In order to establish consistency in the sup-norm we assume the following. 
%\begin{assA}\label{ConsisAss}(i) There exists $\Pi_K g\in\mathcal{G}_K$ such that $\|\Pi_K g-g\|_\infty=o(1)$.	(ii) $\Ex[Y^2] < \infty $ (iii) a random sample $\{(Y_i, Z_i)\}_{i=1}^n$ of $ (Y,Z)$ is observed. 
%\end{assA}
%Assumption \ref{ConsisAss} (i) is typical in the sieve estimation literature and satisfied for many combinations of smoothness classes and basis functions, see \cite{Chen07}.
% 
%
%We formulate the following consistency result.
%\begin{lem}\label{thm::consist}
%	Let Assumptions \ref{A:IV}-\ref{ConsisAss} be satisfied, then we have $\|\widehat g-g\|_\infty=o_p(1)$
%\end{lem}
%The proof of Lemma \ref{thm::consist} connects arguments from the rank estimation literature to the general theory of consistency for sieve M-estimators in \cite{Chen07}.

For the remainder of the paper we consider the centered criterion function 
\begin{align}\label{def:crit:centered}
%\Gamma(S_i, S_j, \phi)&= Y_i[\mathds{1}(\phi(Z_i)>\phi(Z_j))- \mathds{1}(g(Z_i)>g(Z_j))]\\
\mathcal Q(\phi) &= \Ex\big[Y_i\big(\mathds{1}\{\phi(Z_i)>\phi(Z_j)\}- \mathds{1}\{g(Z_i)>g(Z_j)\}\big)\big]
\end{align}
%and its empirical analog 
%\begin{align*}
%%\Gamma(S_i, S_j, \phi)&= Y_i[\mathds{1}(\phi(Z_i)>\phi(Z_j))- \mathds{1}(g(Z_i)>g(Z_j))]\\
%\mathcal Q_n(\phi)&=\frac{2}{n(n-1)}\sum_{1 \leq i < j \leq n}Y_i\big(\mathds{1}\{\phi(Z_i)>\phi(Z_j)\}- \mathds{1}\{g(Z_i)>g(Z_j)\}\big)
%\end{align*}
where $g$ is the regression function satisfying the model equation (\ref{model}). Centering does not change the maximizer in the optimization problem and is thus without loss of generality.

Our analysis builds on a linearization of the nonlinear criterion function $\mathcal Q(\cdot)$. The first directional derivative of $\mathcal Q$ is equal to zero for any arbitrary direction and hence, we consider the second directional derivative which can be viewed as a quadratic approximation to the criterion function $\mathcal Q(\cdot)$.
Specifically, we introduce
\begin{align*}
% Q(\phi - g)&:= \frac{\partial}{\partial \tau} \mathcal Q(g+ \tau (\phi -g)) \;\Big{|}_{\tau=0} \\
 Q(\phi - g):= \frac{\partial^2}{\partial \tau^2} \mathcal Q(g+ \tau (\phi -g)) \;\Big{|}_{\tau=0}  
\end{align*}
denote the second directional derivative of the non-linear functional $\mathcal Q$ in the direction $\phi-g$. We  assume that the functional $Q(\cdot)$ is bi-linear and continuous. 
%
%Then $Q(\cdot)$ is a norm which we introduce here, as it is not possible to establish (local) equivalence of $|\mathcal Q(\cdot)|$ and e.g. the $L^2$-norm $\mynorm*{\cdot}_{L^2(Z)}$. 
Below, we denote $L^2(Z)=\{\phi:\, \|\phi\|_{L^2(Z)}<\infty\}$ where $\|\phi\|_{L^2(Z)}:=\sqrt{\Ex\phi^2(Z)}$.

To account for the potential instability of the estimation problem, we introduce the sieve measure of ill-posedness
\begin{align*}
\tau_K = \sup_{\phi \in \mathcal G_K} \frac{\mynorm{\phi- \Pi_Kg}_{L^2(Z)} }{  Q(\phi- \Pi_Kg)}
\end{align*}
to account for the fact that the criterion function and the $L^2$-norm are generally not (locally) equivalent. If $\tau_K \to \infty$ as $K\to \infty$ the problem of estimating $g$ is  ill-posed in rate and additional regularization slows down convergence in the strong $L^2$- norm. In contrast to \cite{ChenPouzo12}, we rely on the second directional derivative in the denominator.

For the following assumption we introduce a local neighborhood of $g$ and define the space $\mathcal{G}_K^\delta = \{\phi \in \mathcal{G}_K: \mynorm*{\phi-g}_{L^2(Z)}< \delta\}$ with $\delta > 0$. 
\begin{assA}\label{A:Rate}
	(i) A random sample $\{(Y_i, Z_i)\}_{i=1}^n$ of $ (Y,Z)$ is observed;
	(ii) there exists $\Pi_K g\in\mathcal{G}_K$ such that $\|\Pi_K g-g\|_{L^2(Z)}=O(K^{-\alpha/d_z})$;	
	(iii) $\Ex[U^2] < \infty $ and $g \in L^2(Z)$; 
	(iv) for any $\phi$ in $\mathcal{G}_K^\delta$ there exists a constant $0<\eta< 1$ such that $|\mathcal Q(\phi)-Q(\phi-g)| \leq \eta \cdot Q(\phi -g)$;
	(v) the cdf of $g(Z)$ is Lipschitz continuous, i.e., $|F_{g(Z)}(a) - F_{g(Z)}(b)|\leq C|a-b|$ for some constant $C$ and any $a,b$;
	and (vi) $\tau_K \sqrt{K/n}=o(1)$.
\end{assA}
Assumption \ref{A:Rate} (ii) imposes regularity on the regression function $g$ via a sieve approximation error, see also \cite{Chen07} for examples.
%is satisfied for many combinations of smoothness classes for $g$ and sieve bases. Examples are the H\"older and Sobolev classes in \cite{Chen07} and spline or wavelet sieves.
Assumption \ref{A:Rate} (iv) is also known as the tangential cone condition and implies that $\mathcal Q(\phi)$ is locally equivalent to $Q(\phi -g)$ which is a typical condition required to derive the convergence rate for sieve estimators; see \cite[Assumption 4.1(ii)]{ChenPouzo12} and also \cite{Dunker2011}. 
Assumption \ref{A:Rate} (v) amounts to a local continuity assumption for the kernel of an empirical process, see e.g. \cite[Condition 3.8]{Chen07}. Assumption \ref{A:Rate} (vi) restricts the growth of $K$ relative to the sieve measure of ill-posedness $\tau_K$ and is required for consistency, see Lemma \ref{App::Lemma2}.   

\begin{rem}[Illustration of Ill-Posedness]\label{rem:ill:posed}
To give an insight on the source of ill-posedness, note that
\begin{align*}
\mathcal Q(\phi)=\Ex\left[Y_i\left(F_{g(Z_i)|Y_i}(g(Z_j)) - F_{\phi(Z_i)|Y_i}(\phi(Z_j))   \right) \right]
\end{align*}  
which shows that if there is little variation in the distribution of $F_{g(Z)|Y}$ for variations of g then the ill-posed inverse problem becomes more severe. This is further illustrated by the following lemma where we study a special case for which we can derive $Q$ analytically and give sufficient conditions for Assumption \ref{A:Rate} (iv).
\begin{lem}\label{Lem::AddSep}
		Consider the additive separable model $g(Z)=Z_{1} + \widetilde g(Z_{2})$ with bivariate $Z=(Z_{1}, Z_{2})$. Then Assumption \ref{A:Rate} (iv) is satisfied if $f^{'}_{Z_{1} | Z_{2}}$ is uniformly bounded away from zero and $f^{''}_{Z_{1} | Z_{2}}$ is uniformly bounded above.
\end{lem}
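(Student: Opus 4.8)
The plan is to compute the centered criterion $\mathcal Q(\phi)$ and its second directional derivative $Q(\phi-g)$ explicitly in the additively separable model, and then bound the remainder $\mathcal Q(\phi)-Q(\phi-g)$ in terms of $Q(\phi-g)$ using a second-order Taylor expansion of the relevant conditional cdf. First I would use the representation given in Remark \ref{rem:ill:posed},
\begin{align*}
\mathcal Q(\phi)=\Ex\left[Y_i\left(F_{g(Z_i)\mid Y_i}(g(Z_j)) - F_{\phi(Z_i)\mid Y_i}(\phi(Z_j))\right)\right],
\end{align*}
and exploit the additive structure $g(Z)=Z_1+\widetilde g(Z_2)$ together with the special-regressor role of $Z_1$: conditioning on $Z_2$, the event $\{g(Z_i)>g(Z_j)\}$ is $\{Z_{1,i}>Z_{1,j}+\widetilde g(Z_{2,j})-\widetilde g(Z_{2,i})\}$, so the inner probabilities become integrals of $f_{Z_1\mid Z_2}$ over half-lines whose endpoints are shifted by $\widetilde g$. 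This is where the hypotheses on $f'_{Z_1\mid Z_2}$ and $f''_{Z_1\mid Z_2}$ enter: writing $\phi=g+\tau(\phi-g)$ and differentiating twice in $\tau$ produces a leading quadratic term with kernel proportional to $f'_{Z_1\mid Z_2}$ (hence the lower bound away from zero makes $Q(\phi-g)$ genuinely comparable to a squared seminorm of $\phi-g$), while the remainder after the quadratic term carries a factor $f''_{Z_1\mid Z_2}$ (hence the uniform upper bound controls it).

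Concretely, I would fix the direction $\psi=\phi-g$, set $\phi_\tau=g+\tau\psi$, and expand $\tau\mapsto\mathcal Q(\phi_\tau)$ to second order: $\mathcal Q(\phi_\tau)=\mathcal Q(g)+\tau\,\partial_\tau\mathcal Q|_0+\tfrac{\tau^2}{2}\partial_\tau^2\mathcal Q|_0+R(\tau)$, where $\mathcal Q(g)=0$ by centering and $\partial_\tau\mathcal Q|_0=0$ as noted before Assumption \ref{A:Rate}. Evaluating at $\tau=1$ gives $\mathcal Q(\phi)=\tfrac12 Q(\psi)+R(1)$, and a Lagrange form of the remainder expresses $R(1)$ as an integral of the third $\tau$-derivative, which after the explicit computation is an average of $f''_{Z_1\mid Z_2}$ against a cubic expression in $\psi$. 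I would then bound $|R(1)|\le C'\,\|f''_{Z_1\mid Z_2}\|_\infty\,\Ex[|\psi(Z_i)|^3$-type terms$]$ and, using that $\psi\in\mathcal G_K^\delta$ has $\|\psi\|_{L^2(Z)}<\delta$, dominate the cubic terms by $\delta$ times a quadratic expression, which in turn is $\le \text{const}\cdot Q(\psi)$ by the lower bound on $f'_{Z_1\mid Z_2}$. Choosing $\delta$ small enough yields $|\mathcal Q(\phi)-Q(\psi)|\le\eta\,Q(\psi)$ with $\eta<1$, which is exactly Assumption \ref{A:Rate} (iv) (absorbing the harmless factor $\tfrac12$ into the normalization of $Q$, or equivalently noting the expansion is really $\mathcal Q(\phi)-Q(\psi)=R(1)$ once $Q$ is defined as the full second-order term).

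The main obstacle I anticipate is bookkeeping the interplay between the three sources of $Y_i$-weighting and the conditioning: $\mathcal Q$ involves $\Ex[Y_i(\cdots)]$ over an independent copy $(Y_j,Z_j)$, and one must be careful that, after conditioning on $(Z_{2,i},Z_{2,j})$ and integrating out $Z_{1,i},Z_{1,j}$, the dependence of $Y_i$ on $Z_{1,i}$ does not spoil the clean Taylor structure — this is handled by writing $\Ex[Y_i\mid Z_i]$ first and integrating against $f_{Z_1\mid Z_2}$, so that it is really the smoothness of $f_{Z_1\mid Z_2}$ (not of $\Ex[Y\mid Z]$) that is being used. A secondary technical point is justifying differentiation under the expectation/integral sign, which follows from the assumed boundedness of $f'_{Z_1\mid Z_2}$, $f''_{Z_1\mid Z_2}$ together with $\Ex[U^2]<\infty$ (Assumption \ref{A:Rate} (iii)); these are routine dominated-convergence arguments that I would state but not belabor.
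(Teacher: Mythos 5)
Your proposal follows essentially the same route as the paper's proof: condition on the second copy's $Z_2$-component and integrate out its special regressor $Z_1$ so that the indicator differences become differences of the conditional cdf $F_{Z_1\mid Z_2}$, then Taylor-expand to second order in the direction $\phi-g$, identify the leading quadratic term with kernel $f'_{Z_1\mid Z_2}$ as $Q_g(\phi-g)$, and bound the cubic remainder (carrying $f''_{Z_1\mid Z_2}$ at an intermediate point) by a small multiple of $Q_g(\phi-g)$ on a sufficiently small neighborhood of $g$. The proposal is correct and matches the paper's argument, including its (shared) reliance on the neighborhood being small in a sup-norm sense to dominate the cubic term by the quadratic one.
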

The special case outlined in Lemma \ref{Lem::AddSep} illustrates the behavior of $\tau_K$. If the density $f_{Z_{21} | Z_{22}}$, that is the conditional density of the separable covariate, is flat in the relevant support, we may encounter the case that the criterion $\mathcal{Q}$ is close to zero for candidate functions that are arbitrarily far away from the true function in the $L^2$- sense.
\end{rem}

 We further illustrate this issue in a Monte Carlo simulation study in Section \ref{Sec::MC}, where we show that the estimation problem becomes more difficult  as $f_{Z_{21} | Z_{22}}$ becomes more flat. We are now in a position to provide a general rate of convergence of our sieve rank estimator $\widehat g$.
%  So the behavior of $\tau_K$ will generally depend on the distribution of observables as well as the specific model under study and chosen normalization.

%The following Theorem establishes the convergence rate of $\widehat{g}$ in the norm $\mynorm{\cdot}_{L^2(Z)}$.
\begin{theo}\label{thm::rate}
	Let Assumptions \ref{A:IV}-\ref{A:Rate} be satisfied. It holds that
	\begin{align*}
     \mynorm*{\widehat g-g}_{L^2(Z)}=O_p\Big( \max \Big\{ \tau_K \sqrt{\frac{K}{n}}, \; K^{-\alpha/d_z} \Big\}\Big)
     \end{align*}
\end{theo}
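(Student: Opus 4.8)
The plan is to follow the standard two-term decomposition for sieve M-estimators: a deterministic bias term coming from the sieve approximation $\Pi_K g$ of $g$, controlled by Assumption~\ref{A:Rate}(ii), and a stochastic term coming from the fluctuation of the empirical criterion $\mathcal Q_n$ around its population counterpart $\mathcal Q$, controlled by an empirical-process bound of order $\sqrt{K/n}$ inflated by the sieve measure of ill-posedness $\tau_K$. Concretely, since $\widehat g_w$ maximizes $\mathcal Q_n$ over $\mathcal G_K$, we have $\mathcal Q_n(\widehat g) \ge \mathcal Q_n(\Pi_K g)$. I would first rewrite this as a lower bound on $Q(\widehat g - g)$ by (a) using the tangential cone condition, Assumption~\ref{A:Rate}(iv), to replace $\mathcal Q$ by the bilinear form $Q$ up to the factor $(1\pm\eta)$, and (b) using consistency (Lemma~\ref{App::Lemma2}, invoked via Assumption~\ref{A:Rate}(vi)) so that $\widehat g$ eventually lies in the local neighborhood $\mathcal G_K^\delta$ where (iv) applies.

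Next I would bound the empirical fluctuation. Write $\nu_n(\phi) := \mathcal Q_n(\phi) - \mathcal Q(\phi)$; this is a centered $U$-process indexed by $\phi \in \mathcal G_K$ with kernel $Y_i(\mathds 1\{\phi(Z_i)>\phi(Z_j)\} - \mathds 1\{g(Z_i)>g(Z_j)\})$. The key modulus-of-continuity estimate is that, on $\mathcal G_K^\delta$, $|\nu_n(\phi) - \nu_n(\Pi_K g)|$ is of order $\sqrt{K/n}\,\big(\|\phi - \Pi_K g\|_{L^2(Z)} + o(1)\big)$ with high probability; the $L^2$-continuity of the kernel's variance follows from the Lipschitz continuity of $F_{g(Z)}$ in Assumption~\ref{A:Rate}(v), the complexity factor $K$ from $\dim \mathcal G_K = K$, and the $\Ex[U^2]<\infty$, $g\in L^2(Z)$ conditions in (iii) guarantee the requisite moment/envelope control; I would apply a Hoeffding decomposition to reduce the $U$-process to an empirical process plus a degenerate lower-order remainder, then a standard maximal inequality / local entropy bound over the linear sieve. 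Combining: from $\mathcal Q_n(\widehat g)\ge \mathcal Q_n(\Pi_K g)$ and $\mathcal Q(\Pi_K g) \ge \mathcal Q(\widehat g) - C\,(\text{bias terms})$ — using that $g$ maximizes $\mathcal Q$ and the bilinear form $Q$ is continuous so that $|Q(\Pi_K g - g)| = O(\|\Pi_K g - g\|_{L^2(Z)}^2) = O(K^{-2\alpha/d_z})$ — I obtain
\begin{align*}
(1-\eta)\, Q(\widehat g - g) \;\le\; O_p\!\Big(\sqrt{K/n}\,\|\widehat g - \Pi_K g\|_{L^2(Z)}\Big) + O\big(K^{-2\alpha/d_z}\big).
\end{align*}

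Finally I would convert the bound on $Q(\widehat g - g)$ into one on the $L^2$ norm by invoking the definition of $\tau_K$: since $\widehat g - \Pi_K g \in \mathcal G_K$, $\|\widehat g - \Pi_K g\|_{L^2(Z)} \le \tau_K\, Q(\widehat g - \Pi_K g)$, and by bilinearity $Q(\widehat g - \Pi_K g) \le Q(\widehat g - g) + |Q(\Pi_K g - g)|$. Writing $r_n := \|\widehat g - \Pi_K g\|_{L^2(Z)}$, the combination yields a quadratic inequality of the shape $r_n \lesssim \tau_K\big(\sqrt{K/n}\, r_n + K^{-2\alpha/d_z} + \cdots\big)$; since $\tau_K\sqrt{K/n} = o(1)$ by (vi), the $r_n$ term on the right is absorbed, giving $r_n = O_p(\max\{\tau_K\sqrt{K/n},\, \tau_K K^{-2\alpha/d_z}\})$. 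The triangle inequality with $\|\Pi_K g - g\|_{L^2(Z)} = O(K^{-\alpha/d_z})$ then gives the stated rate $O_p(\max\{\tau_K\sqrt{K/n},\, K^{-\alpha/d_z}\})$ (the deterministic approximation term $K^{-\alpha/d_z}$ dominating its $\tau_K K^{-2\alpha/d_z}$ counterpart after the usual rearrangement). I expect the main obstacle to be the modulus-of-continuity bound for the degenerate-plus-linear $U$-process over the sieve — in particular establishing the right $L^2$-localization so that the stochastic term scales linearly in $\|\widehat g - \Pi_K g\|_{L^2(Z)}$ rather than merely being $O_p(\sqrt{K/n})$ uniformly — since the discontinuity of the rank indicator means the kernel is not Lipschitz in $\phi$ and one must lean on (v) to smooth things in expectation while handling the chaining carefully.
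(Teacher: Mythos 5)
Your proposal is correct and follows essentially the same route as the paper: starting from $\mathcal Q_n(\widehat g)\ge\mathcal Q_n(\Pi_K g)$, applying the Hoeffding decomposition of the rank U-process, using the tangential cone condition and consistency (Lemma \ref{App::Lemma2}) to localize, bounding the empirical-process and degenerate U-process parts by $O_p(\sqrt{K/n})$ via bracketing/VC-entropy arguments resting on Assumption \ref{A:Rate}(iii) and (v), and converting the bound on $Q_g(\widehat g-g)$ into an $L^2$ rate through the ill-posedness measure $\tau_K$. The only deviation is that you insist on a modulus-of-continuity bound scaling linearly in $\|\phi-\Pi_K g\|_{L^2(Z)}$, which the paper does not attempt and which is unnecessary here (indeed, with a purely linear modulus the $\tau_K\sqrt{K/n}$ term would be absorbed and drop out of your final self-bounding inequality); the uniform $O_p(\sqrt{K/n})$ bound over the local sieve neighborhood that you mention as a fallback is exactly what the paper uses and suffices for the stated rate.
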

The proof of Theorem \eqref{thm::rate} makes use of a representation of second-order U-processes as empirical processes following \cite{ClemLugosi08}.
To the best of our knowledge, this is the first convergence rate result for nonparametric M-estimators with a rank-based criterion function in the presence of ill-posedness.

The next corollary provides concrete rates of testing
when the dimension parameter $K$ is chosen to level variance and square bias under classical smoothness conditions.
We call our model \textit{mildly ill-posed} if: $\tau_k \sim k^{\gamma/d_z}$ with $\gamma > 0$ and \textit{severely ill-posed} if: $\tau_k\sim\exp(k^{\gamma/d})$, with $\gamma > 0$.\footnote{If $\{a_n\}$ and $\{b_n\}$ are sequences of positive numbers, we use the notation $a_n \lesssim b_n$ if $\limsup_{n\to\infty}a_n/b_n<\infty$ and $a_n\sim b_n$ if $a_n\lesssim b_n$ and $b_n\lesssim a_n$.}
\begin{coro}
	Let Assumptions \ref{A:IV}-\ref{A:Rate} be satisfied. 
	\begin{itemize}
		\item[1.] Mildly ill-posed case: setting $K \sim n^{d_z/d_z+2\gamma+2\alpha}$ yields
		\begin{align*}
	\mynorm*{\widehat g-g}_{L^2(Z)}=O_p(n^{-\alpha/(2\alpha+2\gamma+d_z)}).
		\end{align*}
		\item[2.] Severely ill-posed case: setting $K\sim \log(n)^{d/\gamma}$ yields
		\begin{align*}
		\mynorm*{\widehat g-g}_{L^2(Z)}=O_p(\log(n)^{-\alpha/\gamma}).
		\end{align*} 
	\end{itemize}   
\end{coro}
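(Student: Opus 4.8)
The plan is to simply substitute the two proposed choices of $K$ into the general rate bound from Theorem~\ref{thm::rate}, namely $\mynorm*{\widehat g-g}_{L^2(Z)}=O_p\big(\max\{\tau_K\sqrt{K/n},\,K^{-\alpha/d_z}\}\big)$, and to verify in each case that the variance term $\tau_K\sqrt{K/n}$ and the bias term $K^{-\alpha/d_z}$ balance (up to constants), so that the maximum is of the stated order. One must also check that the side condition Assumption~\ref{A:Rate}(vi), $\tau_K\sqrt{K/n}=o(1)$, holds for the chosen $K$, so that Theorem~\ref{thm::rate} indeed applies.

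\emph{Mildly ill-posed case.} Here $\tau_K\sim K^{\gamma/d_z}$, so the variance term is $\tau_K\sqrt{K/n}\sim K^{(2\gamma+d_z)/(2d_z)}n^{-1/2}$ and the squared bias is $K^{-2\alpha/d_z}$. Setting these equal (in exponents of $K$ and $n$) gives $K^{(2\gamma+d_z+2\alpha)/d_z}\sim n$, i.e. $K\sim n^{d_z/(2\alpha+2\gamma+d_z)}$, which is the stated choice. Plugging back, the common order is $K^{-\alpha/d_z}\sim n^{-\alpha/(2\alpha+2\gamma+d_z)}$, giving the claimed rate. Since $\alpha>0$, this is $o(1)$, so (vi) is satisfied.

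\emph{Severely ill-posed case.} Here $\tau_K\sim\exp(K^{\gamma/d})$, which dominates any polynomial factor; the variance term behaves (up to lower-order polynomial factors in $K$ and the $n^{-1/2}$ factor) like $\exp(K^{\gamma/d})n^{-1/2}$. Choosing $K\sim \log(n)^{d/\gamma}$ makes $K^{\gamma/d}\sim\log n$, hence $\exp(K^{\gamma/d})\sim n$, so the variance term is of polynomial-in-$\log n$ order times $n^{1/2}n^{-1/2}$, i.e. polylogarithmic — dominated by the bias term $K^{-\alpha/d_z}\sim(\log n)^{-\alpha\,d/(\gamma d_z)}$. One has to be slightly careful to confirm the bias term dominates and to absorb the residual polynomial-in-$K$ factors from $\sqrt{K}$; the net rate is then $O_p((\log n)^{-\alpha/\gamma})$ as stated (with the understanding that $d$ here plays the role of $d_z$, consistent with the paper's convention in the severely ill-posed definition). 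Again $\tau_K\sqrt{K/n}=o(1)$ since it is polylogarithmic over a genuine $n^{0}$—more precisely one checks the exponents cancel only up to the lower-order factor, which still tends to zero, so (vi) holds.

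The only mildly delicate point — not a real obstacle — is bookkeeping of the sub-polynomial factors: in the severely ill-posed case $\exp(K^{\gamma/d})$ must be shown to swamp the $\sqrt{K}$ factor and to be matched to $n^{1/2}$ exactly on the log scale, and in the mildly ill-posed case one should confirm that rounding $K$ to an integer does not affect the rate. Both are routine. I expect no substantive difficulty; the corollary is a direct specialization of Theorem~\ref{thm::rate}.
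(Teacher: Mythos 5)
Your overall strategy is exactly what the corollary requires — the paper offers no separate proof, since it is meant as a direct substitution of the two choices of $K$ into the bound of Theorem \ref{thm::rate} — and your treatment of the mildly ill-posed case is correct: balancing $K^{(2\gamma+d_z)/(2d_z)}n^{-1/2}$ against $K^{-\alpha/d_z}$ gives $K\sim n^{d_z/(2\alpha+2\gamma+d_z)}$ and the rate $n^{-\alpha/(2\alpha+2\gamma+d_z)}$, with Assumption \ref{A:Rate}(vi) holding automatically.

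There is, however, a genuine arithmetic slip in your severely ill-posed case. You write that $K\sim(\log n)^{d/\gamma}$ makes $\exp(K^{\gamma/d})\sim n$ and then assert the variance term is ``$n^{1/2}n^{-1/2}$'' times a polylog factor. These two claims are inconsistent: if $\exp(K^{\gamma/d})\sim n$, then
\begin{equation*}
\tau_K\sqrt{K/n}\;\sim\; n\cdot\sqrt{K/n}\;=\;\sqrt{Kn}\;\longrightarrow\;\infty,
\end{equation*}
which violates Assumption \ref{A:Rate}(vi) and makes Theorem \ref{thm::rate} inapplicable, so the argument as written fails at this step. The repair is standard but must be stated: the $\sim$ in $K\sim(\log n)^{d/\gamma}$ hides a proportionality constant $c$, and one must choose $c$ so that $K^{\gamma/d}=c^{\gamma/d}\log n$ with $c^{\gamma/d}\le 1/2-\epsilon$ for some $\epsilon>0$. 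Then $\exp(K^{\gamma/d})\le n^{1/2-\epsilon}$, the variance term is $O\bigl(n^{-\epsilon}\sqrt{K}\bigr)=o\bigl((\log n)^{-\alpha/\gamma}\bigr)$, condition (vi) holds, and the bias term $K^{-\alpha/d_z}\sim(\log n)^{-\alpha/\gamma}$ (with $d=d_z$, as you note) dominates and delivers the stated rate. Note also that calibrating $\exp(K^{\gamma/d})$ to exactly $n^{1/2}$ is still not enough, since the residual $\sqrt{K}$ factor would then diverge polylogarithmically; the strict inequality $c^{\gamma/d}<1/2$ is what absorbs it.
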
 Both convergence rates are the optimal rates for ill-posed problems. As outlined in the discussion following Lemma \ref{Lem::AddSep}, the severity of the ill-posedness will generally depend on the chosen normalization and features of the data. %For the additively separable model in Lemma \ref{Lem::AddSep} the ill-posedness is more severe whenever $f_{Z_{21} | Z_{22}}$ is flat. 

\section{Monte Carlo Simulation Study}\label{Sec::MC}

This section demonstrates how nonclassical measurement errors in the outcome alters mean regression results in finite samples and shows the usefulness of our approach to correct for such biases. We compare regression function estimates obtained from simply ignoring the measurement error with our estimator, which accounts for the presence of the error. Throughout this section, simulation results are based on a sample of size of $n=1000$ and  1000 Monte Carlo iterations. 

We consider the following data generating process
\begin{align*}
Y^* & =Z_1+g(Z_2)+U \\
Y  &= h(Y^*)+V,
\end{align*}
where $Z_1 \sim \mathcal N(1,\sigma^2)$, $Z_2 \sim \mathcal U[-3, 3]$ independent of each other, $g(\cdot)=\sin(\cdot)$ and the error terms $(U,V)\sim\mathcal N(0,I_2)$. Here, $I_2$ is the 2-dimensional identity matrix and for the standard deviation of $Z_1$ we choose $\sigma=1$, which will be varied later. In the above model, $g$ is identified up to a location normalization. Analogously we could specify a linear or nonlinear function on $Z_1$ and impose an additional scale normalization on $g$. 
%The data generating process is in line with the model in Lemma \ref{Lem::AddSep} and we also use this to illustrate the ill-posedness of the sieve rank estimator.
The function $h$ in the measurement error equation is chosen as
\begin{align*}
h(Y^*)= \begin{cases}
         q_{0.7}+b(Y^*-q_{0.7}) \;\;&\text{if}\;\; Y^* > q_{0.7}\\
         Y^*, \;\;&\text{if}\;\; q_{0.3} \le Y^* \le q_{0.7}  \\
         q_{0.3}-a(q_{0.3}-Y^*) \;\;&\text{otherwise}\
         \end{cases}
\end{align*}
where $q_{0.3}, q_{0.7}$ denote the $30\%$- and $70\%$-quantile of $Y^*$ (determined via numerical approximation). The setup is analogous to a  typical survey data setting with over- or underreporting in the tails of $Y^*$, whereas the center of the distribution is not affected. The scalars $a,b$ are chosen to vary the magnitude of measurement error. 
\begin{figure}[!htbp]
	\centering
	\includegraphics[width =13cm]{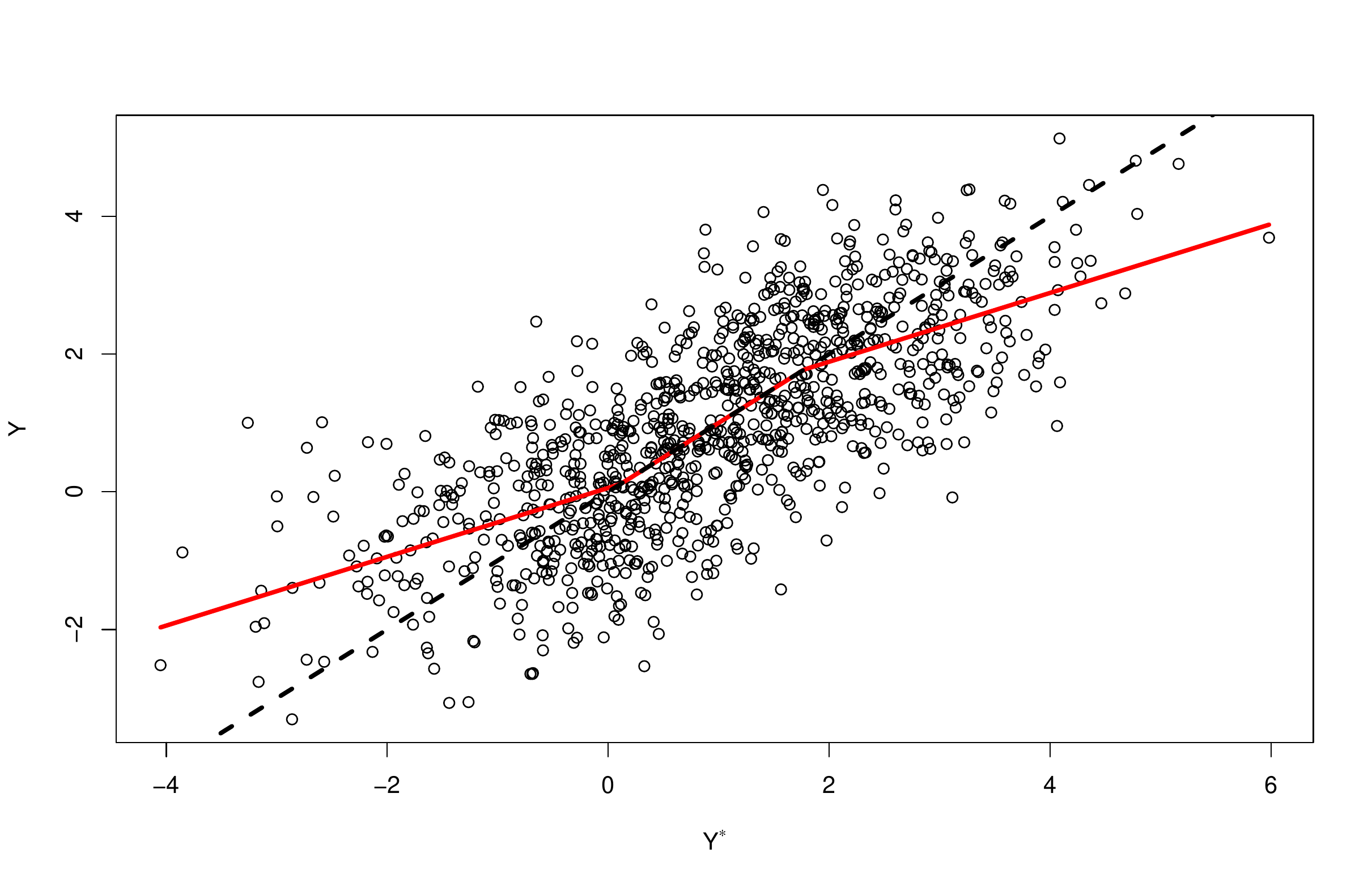}
		\caption{
		 Realizations of $Y^*,Y$ when $a=b=0.5$ based on a random draw of size $n=1000$. 
		The red solid line depict the function $h$ and the black dashed line the $45^\circ$ line.}\label{Plot_h}
\end{figure}

Figure \ref{Plot_h} illustrates the effects of the measurement error for the case $a=b=0.5$. 
%The first panel shows the densities of $Y^*$ and $Y$. We see that the measurement error only slightly compresses the tails of true outcome distribution. 
We show the realizations of $Y$ and $Y^*$ for a specific draw of the data generating process and plots the function $h$. We compare the measurement error function $h$ (depicted as red solid line) with the setup of  classical measurement error, which is captured by the $45^\circ$ line (depicted as black dashed line). 

We implement the sieve rank estimator $\widehat g$ given in \eqref{Estimator} using a linear sieve space with B-spline basis functions of order 3 with 2 interior knots that are placed according to quantiles of the empirical distribution. Thus we have $K=4$. The elements of the sieve space are normalized at the point $(0,0)$ which is the correct value of the true function $\sin(\cdot)$ at $0$. This normalization can also be rationalized as utilizing prior knowledge on the measurement error mechanism in the sense of Assumption \ref{shapecons} (ii). For instance, we can expect that ignoring the measurement error results in estimates that are close to the true function $g$ in the center of the distribution of $Z_2$. Figure \ref{Plot_SimRes} shows the sieve rank estimates $\widehat g$ and compares them to a nonparametric series regression that does not account for nonclassical measurement error in the outcome using the same order and the same knot placement as for $\widehat g$. For the latter estimator the same choice of basis functions and tuning parameters is adopted.  
\begin{figure}[h!]
    \centering
	\includegraphics[width =14cm]{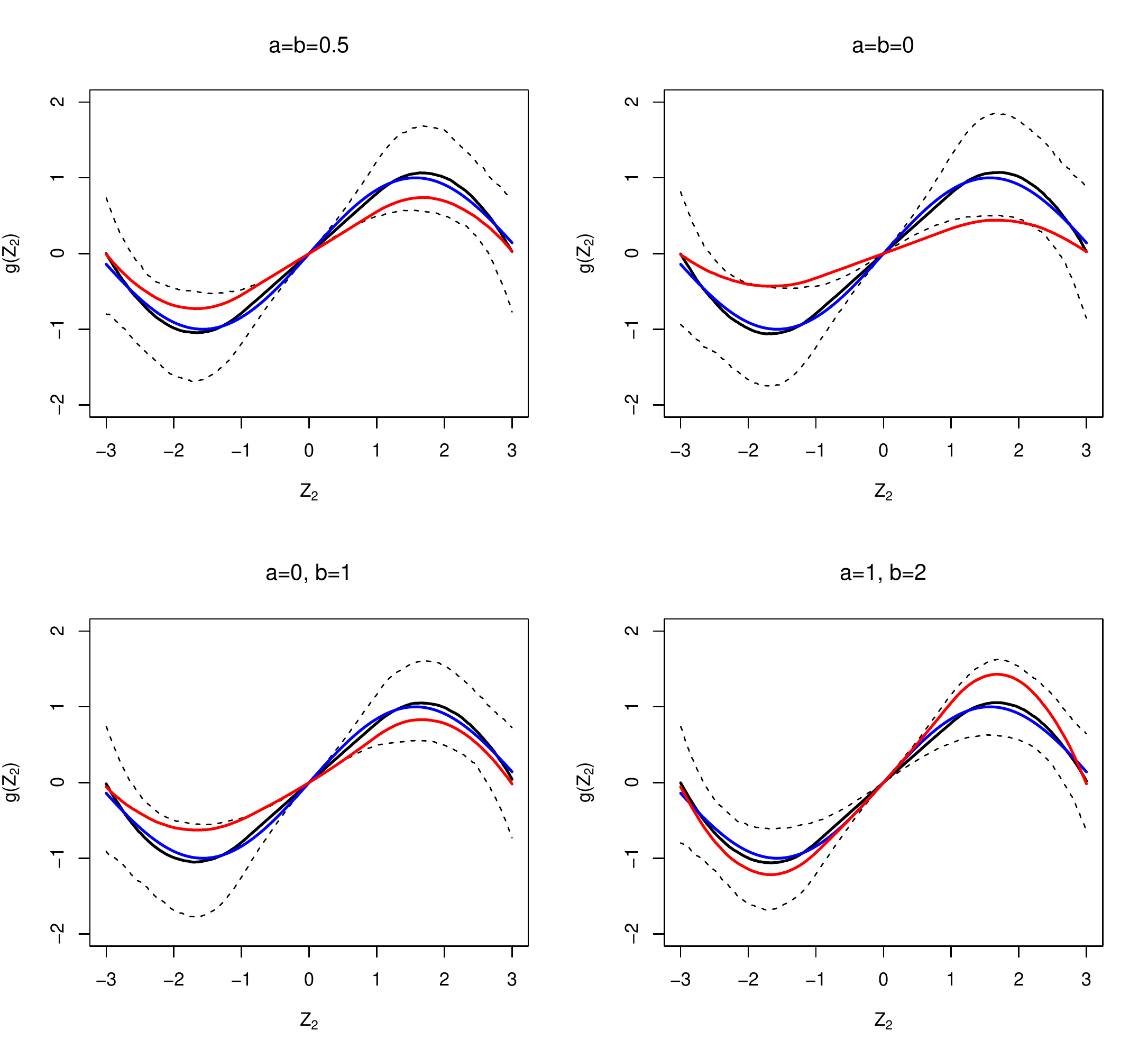}
		\caption{ Estimation results normalized to go through the coordinate $(0,0)$:
		 Solid black line is the median of our sieve rank estimator $\widehat g$, solid red line is the median of  a series estimator with same B-splines specification, solid blue line shows true $g(\cdot)$ function, and  dashed black lines are the 0.95 and 0.05 quantiles over all Monte Carlo rounds.}\label{Plot_SimRes}
\end{figure}
%The solid red line is the median estimate of $\sin(\cdot)$ over the 1000 replications obtained from a nonparametric regression of the error contaminated $Y$ on the covariates. 
%
%The solid black line is the median estimate obtained from a Cavanagh-Sherman estimator accounting for the measurement error normalized to go through the coordinate $(0,0)$. The dashed black lines are the 0.95 and 0.05 quantiles over all Monte Carlo rounds.  
%
%The solid blue line indicates the $\sin(\cdot)$ function. \\
We study different values for $a,b$ amongst which is the severe case $a=b=0$ which essentially implies that at some point the measurements $Y$ are merely random fluctuations around a constant value\footnote{Additionally we perform Kolmogorov-Smirnov tests to test the null hypothesis that $Y$ and $Y^*$ follow the same probability distribution on every drawn sample of the MC study. In the $a=b=0.5$ setting we reject the null on a $5\%$ - level only once in 1000 samples and in the $a=b=0$ case we reject the null in 966 cases. Thus in the strong ME setting, $Y$ and $Y^*$ have different marginal distributions in contrast to the mild ME setting, where differences are virtually undetectable.}. 
We observe from the results in Figure \ref{Plot_SimRes} that our estimation strategy results in an accurate estimate of $g$ in any of the cases, whereas ignoring the measurement error yields estimates with a sizeable bias in the tails of $Z_2$. In the severe setting depicted in the right panel, ignoring measurement error results in a rather flat estimate which is significantly different from the sieve rank estimator. 

The data generating process chosen here is in line with the model in Lemma \ref{Lem::AddSep} and thus allows us to study the degree of ill-posedness in the convergence rate of the estimator. As pointed out in the discussion following Lemma \ref{Lem::AddSep}, the behavior of the sieve measure of ill-posedness $\tau_K$ is governed by the conditional density $f_{Z_1|Z_2}$.  If the density $f_{Z_1|Z_2}$ is flat over the relevant support, $\tau_K$ diverges faster and the ill-posedness is more severe. 

Table \ref{table:mse} below shows mean squared errors of function estimates across different standard deviations of the separable covariate $Z_1$ which affects the slope of the density $f_{Z_1|Z_2}$. For small standard deviations, the conditional density $f_{Z_1|Z_2}$, i.e., here $f_{Z_1}$ by full independence, will be rather flat over most of the support.
For small standard deviations of $Z_1$, the MSE increases more severely with $K$ as compared to large standard deviations. This illustrates that the degree of ill-posedness of the estimation problem is more severe whenever the slope of the density $f_{Z_1|Z_2}$ is small.
\begin{table}[ht]
	\renewcommand{\arraystretch}{1.15}
	\centering
	\begin{tabular}{|c|c||cccc|}
		\hline
		St. Dev. of $Z_1$ &	$Z_2 \sim \mathcal{U}[-c,c]$ & \multicolumn{4}{c|}{$\text{MSE}(\widehat g)$ for sieve dim.}\\
		$\sigma$& c & $K=3$ & $K=4$ & $K=5$ & $K=6$   \\\hline\hline
		%\multirow{3}{*}{0.25} & 1 & 0.06680517 & 0.49621284 & 1.25073496 & 2.57622561	\\
		%& 2  &0.07756741 & 0.39755105 &  1.54013986 & 3.49979881 \\
		%& 3  &0.06030103 & 0.26151263 & 1.36573386 & 5.68578256 \\\hline
		\multirow{2}{*}{0.5} & 1 & 0.02209 & 0.06843 & 0.17294 & 0.52289	\\
		%& 2  & 0.02766140 & 0.06660771 & 0.18090008& 0.64091000 \\
		& 3  & 0.02389 & 0.05982 & 0.17068 & 0.62054 \\\hline
		\multirow{2}{*}{1} & 1 & 0.01579& 0.04293& 0.09087& 0.20775	\\
		%& 2  &0.01739817 & 0.04283811 & 0.08776840 & 0.19468608 \\
		& 3  & 0.01807 & 0.04783 & 0.08118 & 0.19650 \\\hline       
		\multirow{2}{*}{2} & 1 &0.01489 & 0.04316 & 0.09514 & 0.20622	\\
		%& 2  &0.01531369 &0.04360066 & 0.09402195 & 0.19688196 \\
		& 3  & 0.01640 & 0.04580 & 0.08593 & 0.19877 \\\hline
	\end{tabular}
	\caption{ Results for the $\text{MSE}(\widehat g)$ for varying values of the standard deviation $\sigma$ of $Z_1$ and the range $c$ of $Z_2$. 
%	For each MC draw the empirical MSE is calculated and the median MSE over all MC draws is reported.
	}\label{table:mse}		
\end{table}
Additionally we see that this is not the case when the distribution of $Z_1$ is fixed and the dispersion of $Z_2$ is varied. This confirms that the ill-posedness in this setting is not driven by the distribution of $Z_2$ in this setting.
%Figure \ref{Plot_Simill}  below shows estimates across different standard deviations of the separable covariate $Z_1$ which affects the slope of the density $f_{Z_1|Z_2}$. For small standard deviations, the conditional density $f_{Z_1|Z_2}$ will be rather flat over the better part of the support.
%
%\begin{figure}[!h]
%%8.69 wide 8inches height
%	\centering
%	\includegraphics[width =14cm]{Plot_illposAlt2.pdf}
%	\caption{
%		Estimation results with varying standard deviation $\sigma$ of $Z_1$. Otherwise, the same legend applies as in Figure \ref{Plot_SimRes}. Note the different scale on the y-axis in the first and second panel.}\label{Plot_Simill}
%\end{figure}
%We can see that pointwise confidence bands increase when the standard deviation of $Z_1$ is low.
%For standard deviations of 0.5, 1 and 2 we see very small differences in the bias of our estimator $\widehat g$.
%For the case where the standard deviation is lowest, not only confidence bands blow up but also bias increases which suggests that a tiny variation in $Z_1$ will ultimately also result in erroneous estimation results.
%The last row shows that increasing the standard deviation more does not lead to smaller confidence bands.
%This observation is in line with the finite sample behavior of estimators with ill-posed rate. If the density $f_{Z_1|Z_2}$ is flat, $\tau_K$ diverges faster and we need to choose a smaller number of basis functions to control the variance.
% 

\section{Application: Beliefs on Stock Market Returns}\label{Sec::Appl1}
Subjective beliefs on stock market returns are an important determinant in economic models that seek to explain stock market participation and portfolio choice, see e.g. \cite{HuckWeiz19} and the references therein. 
Subjective belief data, however,  is known to be prone to a large degree of measurement error, see the discussion and references in \cite{GaudeckerME}. 
%These authors also argue for the presence of types in the population which do not hold stable beliefs on stock market returns and thus report rather noisy beliefs that do not help in explaining economic behavior empirically.  

%In this application we study nonparametric mean regression models where the outcome variable is a subjective belief measure. We account for the possibility of nonclassical measurement error in the outcome by applying our methodology and comparing it to standard regression models that do not account for this form of measurement error. %Thereby we are the first to explicitly account for the possibly nonclassical nature of measurement error in subjective belief data. 

%We consider an experimental intervention of \cite{HuckWeiz19} where subjective beliefs are varied exogenously to determine the causal impact of individual beliefs on portfolio choices. The idea is that displaying historic return information to survey respondents prior to eliciting their beliefs can serve as an exogenous shift to their beliefs. This can be seen as a first stage of a more general analysis tackling the endogeneity of subjective beliefs.
%However their finding is that the treatments did not sufficiently shift reported beliefs and are thus not an appropriate instrument for analyzing the effect of stock market return beliefs on portfolio choice.

We study the impact of historic return information on subjective beliefs of future stock market returns. We account for nonclassical measurement error in the outcome variable by applying our sieve rank method and contrast the results to a model where we simply ignore measurement error in the outcome. 
%This analysis may serve as an explanation as to why the treatments had no effect on stated beliefs in the original experiment.  

We use novel data from the innovation sample of the 2017 wave of the German Socio Economic Panel (SOEP-IS), which contains survey questions on individual beliefs on future stock market returns.
In the interviews, respondents are asked their expectations on the DAX, Germany’s prime blue chip stock market index,   in one, two, ten and thirty years with respect to the current level. They are asked to provide a direction of the change (increase or decrease) as well as a percentage change.

Prior to elicitation of their beliefs, individuals obtain information about historical DAX returns. Two observations of the time series of yearly DAX returns from $1951$ to $2016$ are randomly drawn and presented to the respondent. Afterwards they are asked to report their beliefs on how the DAX changes in the next year (in percentage points).
\begin{table}[h]
	\renewcommand{\arraystretch}{1.15}
	\centering
	\begin{tabular}{|c||cccccc|}
	\hline
		      & Min. & 1. Quant & Median & Mean & 3. Quant. & Max.   \\\hline\hline
		$Y$&  -50.00  & 1.00 & 4.00 &  3.55 &  7.00 & 130.00   \\
		$Z_1$& -43.94  & -6.08 &  11.36 &  14.77 &  29.06 & 116.06 \\
		$Z_2$ &  -43.94 & -6.08 & 13.99 & 17.13 & 34.97 & 116.06  \\\hline
	\end{tabular} 
	\caption{Summary Statistics (all units are percentage points)}\label{table:sum}
\end{table}

In this application, we are interested in the effect of the historical DAX information on the individuals expected DAX return in one year.
Let $Y^*$ denote the individual true belief on the DAX return in one year and let $Z_1,Z_2$ be the two treatment variables, i.e., the randomly drawn historical returns. The reported belief is denoted by $Y$. 
We consider the following flexible additively separable model
\begin{align}\label{App_Model1}
Y^* = g_1(Z_1)+g_2(Z_2) + g_3(Z_1 \cdot Z_2) + U, \;\;\text{where}\;\; Z \Perp U .
\end{align}
It is difficult to rationalize a classical measurement error assumption a priori. 
%This is especially the case as the correct reporting of beliefs is not incentivized here as the reporting is not coupled with any reward from an experimental outcome.\footnote{Was meinst du damit?? Eine Belohnung falls die Wahrheit gesagt wird?} 
Various forms of nonclassical measurement error may occur in this setting: (i) Respondents may tend to provide rounded values instead of precise beliefs, (ii) respondents may systematically over- or underreport their beliefs, e.g., individuals with extreme beliefs may resort to reporting more modest values, or (iii) the reporting may additionally depend on variables $W$ such as certain cognitive skills or personality traits like patience or perseverance. Note that by the experimental design $Z_1,Z_2$ and $W$ are credibly fully independent so there is no need to specify the variables in $W$ or to apply our weighted sieve rank estimator. 
\begin{figure}[!h]
	\centering
	\includegraphics[width=12cm]{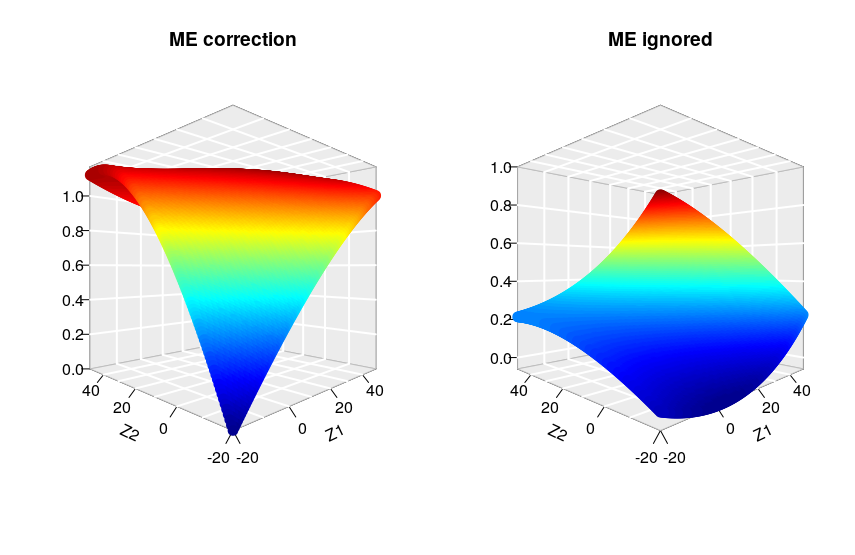}
	\vskip -1.2cm
	\includegraphics[width=12cm]{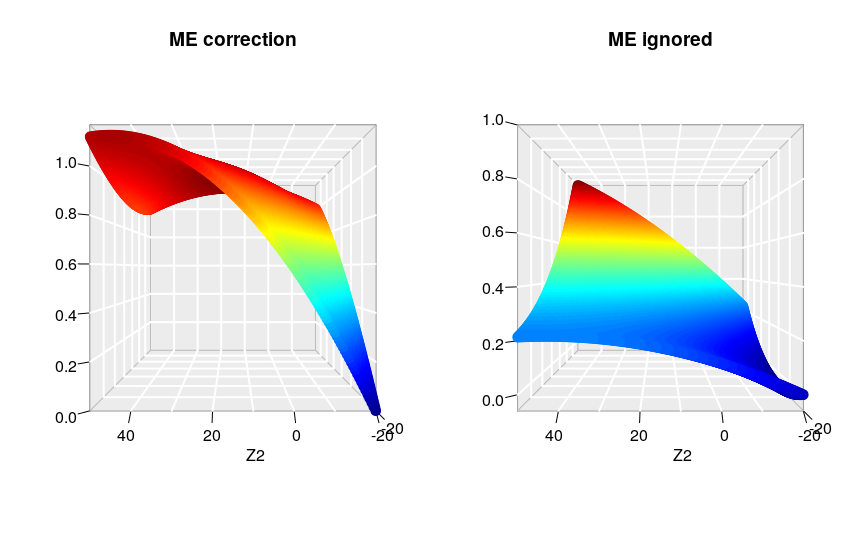}
	\vskip -1cm
	\caption{Nonparametric  estimates of $g(Z_1, Z_2)=g_1(Z_1)+g_2(Z_2) + g_3(Z_1, Z_2)$. The first column contains the estimate from our sieve rank estimator and the second column the estimate from ignoring measurement error. }
	\label{APP_3D2}	 
\end{figure}

%It remains to check the validity of the identifying 
We now discuss the plausibility of Assumptions \ref{A:IV}-\ref{A:Unobs} required for identification.
Assumptions \ref{A:IV} posits that given true beliefs $Y^*$ and relevant individual characteristics $W$, the historic return information $Z_1, Z_2$ have no impact on the mean reported belief.
Assumption \ref{A:Mon} imposes a mild restriction on the measurement error mechanism in that it requires monotonicty in the reporting of beliefs (in the conditional mean). 
%is a mild behavioral restriction in that we assume that for two individuals with same characteristics $W$ but different latent beliefs, we expect the individual with lower beliefs to also report a lower value than the one with the higher belief. This assumption is quite weak as its violation is only likely when there is a strong incentive to misreport around a focal point which we do not find in the setup here. 
Assumption \ref{A:Unobs} is satisfied as $Z_1, Z_2$ are by the experimental setup credibly fully independent of unobservables $U$.
%The regularity conditions in \ref{regIdent} are assumed to hold in this setting and $Z_1, Z_2$ are continuously distributed random variables. Assumption \ref{shapecons}(i) holds by the model choice in \ref{App_Model1} which is sufficient to identify the function $g$ up to location and scale. We refrain from imposing points where Assumption \ref{shapecons}(ii) is satisfied and instead normalize any function estimate to move through fixed points.
%In particular, reported beliefs may be more noisy for individuals who do not actually hold well-formed beliefs $(Y^*=0)$, reported beliefs from the tails of the distribution may be systematically exaggerated. 
The data consists of 1084 interviewed persons but 306 people do not respond to the question on beliefs. We removed missing values and report the summary statistics in Table \ref{table:sum}. 

%The identifying assumptions \ref{A:IV}-\ref{A:Unobs} imply 
%\begin{align*}
%\Ex[Y|Z,W] = H[g_1(Z_1)+g_2(Z_2)+ g_3(Z_1,Z_2),W].
%\end{align*}
%which is then a monotonic function in the function $g=g_1+g_2+g_3$. By the experimental nature of $Z$, the random variables in $Z$ are credibly fully independent of any variables in $W$, and thus we refrain from specifying $W$ explicitly. Henceforth $W$ merely incorporates all additional variables with an immediate influence on the mean of reported beliefs.  

We estimate functions $g_1$, $g_2$, $g_3$ with our method outlined in (\ref{CS-criterion}) and contrast the results to estimates obtained from assuming classical measurement error, i.e., from a standard additive-separable, nonparametric regression of $Y$ on $Z_1$ and $Z_2$ with the respective interaction term. We choose a B-Spline basis of degree two without interior knots for each function estimate. This choice is motivated by a 10-fold cross-validation on the model ignoring the measurement error.   

The results are presented in Figure \ref{APP_3D2}.
Accounting for the measurement error leads to a concave, symmetric effect of both treatments on the individual beliefs. When ignoring the possibility of measurement error, results are much more asymmetric, including convex marginals for the first treatment and flat parts in the surface. In contrast, our method yields that individuals learn conservatively from both treatments which is in line with the a priori economic intuition. Note that on the z-axis that estimates in both columns have been normalized to move through coordinates (-20,-20,0) and (50,50,1). Functions are evaluated on a grid ranging from -20 to 50 which corresponds to the $10\%$- and $90\%$-quantile of the marginal distributions of the treatment variables.
Summarizing, accounting for possible nonclassical measurement error in the outcome variable delivers function estimates of belief formation  that are more in line with economic intuition. 
%Nonclassical measurement error in subjective belief measures may thus serve as an explanation for the inconclusive results of the original experiment in \cite{HuckWeiz19}.  
 		 
\section{Conclusion}
This paper provides new insights on the analysis of regression models with nonclassical measurement error in the outcome variable. Our nonparametric identification result is based on intuitive assumptions involving shape restrictions on measurement error functions. This novel result builds on the equivalence of nonclassical measurement models and generalized regression models. We consider a sieve rank estimator which constructively arises from our identification result and implicitly accounts for the required shape restrictions. We establish the rate of convergence of the sieve rank estimator which is affected by a potentially ill-posed inverse problem. The proposed estimation method is easy to implement and provides numerically stable results as demonstrated in a finite sample analysis. Finally, we demonstrate the usefulness of our method in an empirical application on belief elicitation, where we find measurement error in subjective belief data to be of a nonclassical form.
%employs the method to analyze the measurement error in belief elicitation undermining the notion that measurement error in subjective belief data may generally be of a nonclassical form.

\appendix
\section{Extension: Estimation with Continuous $W$} \label{Sec::ConW}

When $W$ does contain continuous variables, we can simply replace the indicator in (\ref{Estimator}) with a kernel function to account for the fact that $W_i=W_j=w$ is a null event.
Then estimation can proceed with
\begin{align}
\widehat g_w &= \arg \max_{\phi \in \mathcal{G}_{K}} \mathcal Q_n(\phi,w) \;\; \text{where}\\
\mathcal Q_n(\phi, w) &:= \sum_{1 \leq i < j \leq n} Y_i \mathcal K_s (W_{i}- w) \mathcal K_s(W_{j}- w) \mathds{1}\{\phi(Z_{i}) > \phi(Z_{j})\} \nonumber
\end{align} 
where  $K_h$ is defined as 
\begin{align*}
\mathcal K_s(W_{i}-w)=\prod_{l=1}^{d_w} \mathcal K\left(\frac{W_{l,i}-w}{s_l}\right)
\end{align*}
and $\mathcal K: \mathbb{R} \rightarrow \mathbb{R}$ is some kernel function and $s \in \mathbb{R}^{d_w}$ a vector of bandwidths. 

As we move from the original criterion of \cite{CavSherman} to the conditional version with continuous $W$ the computational complexity of the maximization problem increases. Ranking is an $O(n\log(n))$ operation whereas the weighted ranking is performed in $O(n^2)$ time.
This implies that the conditional estimation method is not scalable to large data sets and computation time increases heavily with the sample size. 

The following criterion can be used to deal with continuous W and computation time scales in $n$.
\begin{align}\label{Weighted_Crit}
\mathcal Q_n(\phi,w) & = \sum_{1 \leq i < j \leq n} \mathcal K^U_s(W_i-w) Y_i   \mathcal K^U_s(W_j-w) \mathds{1}\{\phi(Z_{i}) > \phi(Z_{j})\}   \nonumber        \\
%&= \sum_{i=1}^{n} \mathds{1}(w-h < W_i < w+h) Y_i  \sum_{j \neq i} \mathds{1}(w-h < W_j < w+h) \mathds{1}(\phi(Z_{i}) > \phi(Z_{j}))\\
& = \sum_{i:\;\; w-s < W_i < w+s} Y_i  \text{Rank}_s(\phi(Z_i))
\end{align}
with uniform kernel
\begin{align*}
\mathcal K^U_s(W_i-w):=\mathds{1}\{w-s < W_i < w+s\}
\end{align*}
which is again equivalent to applying the sieve rank estimator over a subsample of the data obtained by considering a window of size $2s$ around $w$. 
Weighted rank estimation is studied in \cite{shin2010} and \cite{abrevaya2011} for semiparametric and additively separable models. 
An important special case is again the setting where the function $g(\cdot ,w)$ does not vary with $w$ which is the case of $g$ is additvely separable in a function of $Z$ and $W$.
\begin{rem}\label{Rem::WeightEst}
	Assume the function $g(Z)$ does not depend on $W$. We can consider the following estimator
	\begin{align*}
	\widehat g &= \arg \max_{\phi \in \mathcal{G}_{K}} Q_n(\phi) \;\; \text{where}\\
	Q_n(\phi) &:= \sum_{1 \leq i < j \leq n} Y_i \mathcal{K}_h(W_{i}- W_{j}) \mathds{1}\{\phi(Z_{i}) > \phi(Z_{j})\}
	\end{align*} 
	In contrast to before we consider only those observations in a neighborhood around a fixed value $w$ but we choose the weights according to which distance any pair $(W_i, W_j)$ has to each other. Similar to the approach in (3.5) this is associated with increasing computational complexity as the computation time does not scale with the sample size. 
	
	We thus suggest the following strategy:
	
	First use the criterion in (\ref{Weighted_Crit}) to obtain estimates $\widehat g_w$ across different values of $w \in \textsl{supp}(W)$. Each is an estimate of $g$ as $g$ does not depend in theory on $w$, but estimation results may nevertheless vary for different $w$. 
	Second, aggregate the different estimates $\widehat g_w$ to one final estimator for $g$.
	To this end, we can follow \cite{CKK} which discuss the following two 'aggregation' procedures.
	\begin{align*}
	\widehat{g}_{LS}(z) = \arg \min_{q \in \mathbb{R}} \int_{supp(W)} \nu(w) [\widehat g(z,w)-q]^2 dw \\
	\widehat{g}_{LAD}(z) = \arg \min_{q \in \mathbb{R}} \int_{supp(W)} \nu(w) |\widehat g(z,w)-q| dw 
	\end{align*}
	where $\nu$ is some weighting function with $\int_{supp(W)} \nu(w)dw=1$. 
	
	The implementation is simple. Random draws from $\{W_i\}_{i=1}^N$ yields a set of different realizations $w$ on which to evaluate the local estimators $\widehat g_w$. The LS criterion takes the average of the local estimators, the LAD criterion takes the empirical median to aggregate to a final estimator for $g$. 
	In simulations \cite{CKK} find that the latter estimator performs best as for $w$ in the tails of the distribution of $W$ we may get erratically behaving $\widehat{g}_w$. 
\end{rem}

\subsection{Weighted Rank Estimation}
In this section we assess the performance of a weighted rank estimator for a setting as described in Remark \ref{Rem::WeightEst}. We consider the following data generating process similar to Section \ref{Sec::MC},
\begin{align*}
Y^* & =Z_1+g(Z_2)+m(W)+U\cdot W^2 \\
Y  &= h(Y^*+ W)+V\cdot |W|
\end{align*}
where $g(\cdot)=\sin(\cdot)$, $m(\cdot)=cos(\cdot)$, $W= 0.5\cdot Z_2 + 0.5\cdot U $ and the remaining variables as in Section \ref{Sec::MC} with $h$ parameterized by $a=b=0$. In this setting there is correlation between $Z_2$ and $W$. Further the measurement is additionally affected by the variable $W$. This setting is in line with Remark \ref{Rem::WeightEst} as $g$ does not vary with $W$, and we implement the procedure outlined at the end of this remark with the LAD-criterion as aggregating procedure. 

In order to calculate an estimate of $g$ for each Monte Carlo sample, we first take 50 random draws of the variable $W$, calculate $\widehat{g}_w$ by maximizing (\ref{Weighted_Crit}) for each of the 50 different realizations $w$. Finally, we aggregate the results to a final estimate by taking the sample median over the local estimates $\widehat{g}_w$.  
We vary the bandwidth parameter $\widetilde s$ across different experiments. 
The sample size is $n=1000$ and $500$ Monte Carlo replications are considered.
The following Figure \ref{Plot_Weighted} shows the results.

\begin{figure}[!h]
	
	%16cm height
	\centering
	\includegraphics[width =14cm]{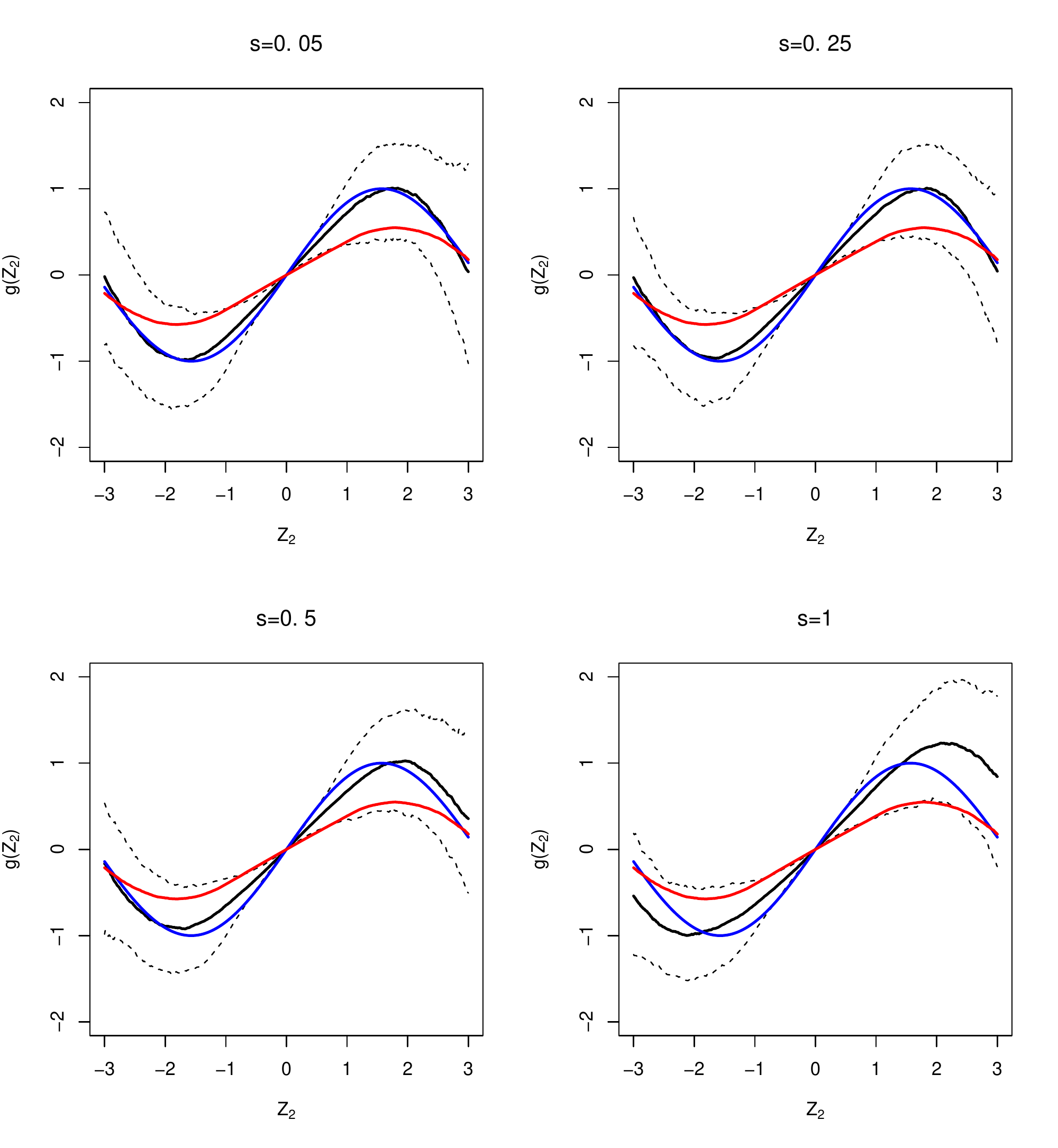}
	\caption{ The blue line is the $g(\cdot)=\sin(\cdot)$ function, the solid black line denotes the median and the dotted lines the respective 0.95 and 0.05 quantiles of the weighted sieve rank estimator over the Monte Carlo experiments. The red line is the median of series estimates of $g$ in the model $Y=Z_1+g(Z_2)+m(W)+U$. Basis functions are set as in Section \ref{Sec::MC} with $K=4$.   \label{Plot_Weighted}}
\end{figure}

If we choose $s$ reasonably small, our estimation procedure is quite close to the truth and outperforms the standard nonparametric estimator that simply ignores the measurement error. Increasing the bandwidth $s$ leads to smaller confidence bands, but considerably increases the bias of the estimate. However in this strong measurement error setting, the weighted sieve rank estimator still outperforms the estimate from ignoring the measurement error.

\section{Proofs and Technical Results}\label{Appendix:proofs}
	First, recall that $X=(Z,W)$ and that $g_w=g(\cdot, w)$. 

\begin{proof}[\textsc{Proof of Theorem \ref{thm:ident}.}]
Proof of \eqref{eq:transform}. The exclusion restriction captured in Assumption \ref{A:IV} implies
	\begin{align}\label{Impl.1-3}
	\Ex[Y|X=x] 
	%&= \Ex[\Ex[Y|Y^*, X, W]|X=x, W=w] \\
	&= \Ex[h(Y^*,W)\;|\; Z=z, W=w] \nonumber \\
	&= \Ex[h(g(Z, W)+U, W)\;| \; Z=z, W=w] \nonumber\\
	&= \Ex[h(g(z, W)+U, W) \;|\; W=w],
		\end{align}
	where the last equation is due  to the conditional exogeneity  imposed in Assumption \ref{A:Unobs}. The results follows from strict montonicity of $H(g_w(z), w) =\Ex[h(g(z, W)+U, W) \;|\; W=w]$ in its first argument, which is due to  Assumption \ref{A:Mon} and Assumption \ref{regIdent} (ii). 
	
Proof of \eqref{eq:transform:max}. 	By the law of iterated expectations, the criterion function $Q(\phi,w)=\Ex[Y_1\mathds{1}\{\phi(X_1) > \phi(X_2)\}|W_1=W_2=w]$ can be rewritten as
	\begin{equation*}
	%Q(\phi)&=\Ex[Q(\phi,W)f_W(W)]  \;\;\text{where} 
    Q(\phi,w)=\frac{1}{2}\Ex[H(g(X_1), W_1)\mathds{1}\{ \phi(X_1) > \phi(X_2)\} 
     + H(g(X_2), W_2)\mathds{1}\{ \phi(X_1) < \phi(X_2) \}  \;|\; W_1=W_2=w],
	\end{equation*} 
using $\Ex[Y|X]=H(g(X),W)$ by equation \eqref{eq:transform}. Under Assumption \ref{A:Mon}, we may consider the case that  holds with $h(\cdot,w)$ weakly monotonically increasing, without loss of generality.
%The argument for the decreasing case is analogous.
	Now the function $g_w$ is a maximizer of $Q(\cdot,w)$, which follows by
	\begin{equation*}
	Q(g_w,w) = \frac{1}{2} \Ex[\max\{H(g(X_1),W_1), H(g(X_2),W_2)  \}  \;|\; W_1=W_2=w]
	\end{equation*}
	and using monotonicity of $H$ in its first argument. In particular,  $m \circ g_w$ is a maximizer of $Q(\cdot,w)$ for any strictly increasing function $m$ (here $\circ$ denotes function composition).

%	Let $m(\cdot)$ be some arbitrary strictly increasing function. 
%	This implies that without Assumption \ref{shapecons} the regression function $g_w$ is at best identified up to strictly increasing transformations.
	
	It remains to show that $g_w$ is a unique maximizer up to strictly increasing transformations.
	Specifically, we show that for any function $\widetilde g_w \neq m \circ g_w$ for an arbitrary strictly monotonic transformation $m$ we have that $Q(\widetilde g_w,w) <Q(g_w, w)$. 
%	That is, a function that is not a strictly monotonic transformation of $g_w$ never maximizes the criterion $Q(\cdot,w)$ and thus for an arbitrary $w\in \textsl{supp}(W)$, $g_w$ is identified up to a strictly monotonic transformation which is the main statement of the Theorem. 
	To do so, consider some arbitrary function $\phi \in \mathcal{G}$ that is not a strictly monotonic transformation of $g_w$. Therefore, there exist $z', z''\in\textsl{supp}(Z)$ such that
$g_w (z') < g_w(z'')$ and $	\phi(z') > \phi(z'')$. 
	By \eqref{eq:transform}, $H(\cdot,w)$ is strictly monotonic and it holds for every $w$ that
	\begin{equation*}
	H(g_w(z'),w) < H(g_w(z''),w).
	\end{equation*}
	By continuity of the functions following Assumption \ref{regIdent} (i) the above inequalities hold in neighborhoods $B_1$ around $z'$ and $B_2$ around $z''$, respectively. By Assumption \ref{regIdent} (iii) these neighborhoods have a strictly positive probability measure.
	This implies
	\begin{align*}
	 Q(g_w,w)-Q(\phi ,w)
	 \geq & \frac{1}{2}\Ex[H(g_w(Z_1), W_1) - H(g_w(Z_2), W_2) | Z_1, Z_2 \in B_1 \times B_2, W_1=W_2=w]& \\
	 &  \times \PP(Z_1, Z_2 \in B_1 \times B_2\;|\; W_1=W_2=w) \\
	 >&  0.
	\end{align*}
%with the last inequality following from the strictly positive probability of regions $B_1 \times B_2$.
Thus, $Q(\cdot,w)$ is only maximized by $g_w$ and strictly monotonic transformations of it. Hence, $g_w$ is identified up to a strictly monotonic transformation.
\end{proof}
\begin{proof}[\textsc{Proof of Corollary \ref{coro:ident}.}] 
Under Assumption \ref{shapecons} (i) any candidate regression function $\widetilde g_w(Z)=\widetilde m_w(Z_1)+\widetilde l_w(Z_{-1})$ must satisfy 
\begin{align*}
\widetilde g_w(Z)&= M_w(g_w(Z))=M_w(m_w(Z_1)+l_w(Z_{-1})=\widetilde m_w(Z_1)+\widetilde l_w(Z_{-1})
\end{align*}
for a strictly monotonic function $M_w$. Thus $M_w$ must be linear and $g_w$ is identified up to location and scale transformation. Indeed, given linear and  strictly monotonic transformations, $g_w$ is the only maximizer of $Q(\cdot,w)$. Under Assumption \ref{shapecons} (ii) we have that $g_{w}(z_1)=\Ex[Y|Z=z_1, W=w]$ and $g_{w}(z_2)=\Ex[Y|Z= z_2, W=w]$ and fixing the parameter space to move through both points leads to $g_w$ being the unique maximizer of $Q(\cdot,w)$ over $\mathcal{G}$ and thus $g_w$ is point identified.
\end{proof}

\begin{proof}[\textsc{Proof of Lemma \ref{Lem::AddSep}.}]
	Let $Z_1, Z_2$ be independent copies of $Z$.
	Consider the additive separable case $g(Z_1)=Z_{11} + \widetilde g(Z_{12})$ with bivariate $Z_1=(Z_{11}, Z_{12})$.
	Analogously we denote $\phi(Z_1)=Z_{11} + \widetilde \phi(Z_{12})$. 
	The following holds for the criterion $\mathcal Q$ 
	\begin{align*}
	|\mathcal Q(\phi) |
	=&  \Ex[Y_1(\mathds{1}\{Z_{11} + \widetilde g(Z_{12})  >  g(Z_{2})\} - \Ex[Y_1\mathds{1}\{Z_{11} + \widetilde \phi(Z_{12})  >  \phi(Z_{2})\}]   \\
	=&  \Ex[Y_1 (F_{Z_{21} | Z_{22}}( \phi(Z_{1})- \widetilde \phi(Z_{21})) -F_{Z_{21} | Z_{22}}( g(Z_{1})- \widetilde g(Z_{21})) )]    ,
	\end{align*}
	as $g$ is the maximizer of $\mathcal{Q}$ and with the second equation due to the law of iterated expectation.
	Using a second-order Taylor decomposition with directional derivatives yields for all $\phi$ in a neighborhood around $g$
	\begin{align*}
	|\mathcal Q(\phi) | = &  Q_g(\phi -g)  + \underbrace{\Ex[Y_1 f^{''}_{Z_{21} | Z_{22}}(\xi)(\widetilde\phi(Z_{12}) - \widetilde g(Z_{12}) +    \widetilde g(Z_{22}) - \widetilde \phi (Z_{22}) )^3 ]}_{=R},
\end{align*}
where $\xi$ is some intermediate variable\footnote{More precisely $\xi=g(Z_{1})- \widetilde g(Z_{22})+ s[\phi(Z_1)-\widetilde \phi(Z_{21}) + \widetilde g(Z_{21})-g(Z_{1})]$ for some $s\in (0,1)$.} and $Q_g$ denotes the directional derivative of $\mathcal Q$ at $g$ which is given by
\begin{align*}
 Q_g(\phi-g)= &\Ex[Y_1 f'_{Z_{21} | Z_{22}}(g(Z_{1})- \widetilde g(Z_{21}))(\widetilde\phi(Z_{12}) - \widetilde g(Z_{12}) +    \widetilde g(Z_{22}) - \widetilde \phi (Z_{22}) )^2 ].
	\end{align*}
	 Applying the Cauchy-Schwarz inequality to $Q_g(\phi-g)$ shows that $Q_g$ is weaker than the $L^2$-norm.
	Further, the remainder term $R$ satisfies
	\begin{equation*}
	|R| \leq \Ex \left[\left|\frac{f^{''}_{Z_{21} | Z_{22}}(\xi)}{f'_{Z_{21} | Z_{22}}(g(Z_{1})- \widetilde g(Z_{21}))}(\widetilde\phi(Z_{12}) - \widetilde g(Z_{12}) +    \widetilde g(Z_{22}) - \widetilde \phi (Z_{22}) )\right| \right] \cdot Q_g(\phi-g)
	\end{equation*}
	and thus the tangential cone condition in Assumption \ref{A:Rate} (iv) is satisfied if the first factor on the right hand side is bounded between 0 and 1. The lower bound holds directly and the upper bound is easily satisfied if the $\delta-$ neighborhood around $g$ is chosen sufficiently small and derivatives of the density are bounded away from zero and infinity, as is condition.
\end{proof}
For the proof of the next results, we require some additional notation to deal with the Hoeffding decomposition of U-statistics, specific function spaces and their respective envelope functions.

We introduce the empirical criterion $\mathcal Q_n(\phi)$ that can be denoted as
\begin{align*}
 \mathcal Q_n(\phi) &  = \frac{2}{n(n-1)}\sum_{1 \leq i < j \leq n} \Gamma(S_i, S_j, \phi)  
%\Gamma(S_i, S_j, \phi)&:=K_h(W_i-w)K_h(W_j-w)Y_i[\mathds{1}(\phi(Z_i)>\phi(Z_j))- \mathds{1}(g_w(Z_i)>g_w(Z_j))] 
\end{align*}
where $S_i=(Y_i, Z_i)$ and which is a second order U-statistic with kernel
\begin{align*}
\Gamma(S_i, S_j, \phi) =  Y_i\big(\mathds{1}\{\phi(Z_i)>\phi(Z_j)\}- \mathds{1}\{g(Z_i)>g(Z_j)\}\big)
\end{align*}
indexed by $\phi \in \mathcal{G}_K$ making it a second-order U-process. Note that $\mathcal{Q}_n$ is centered here which does not affect the optimization. Using the kernel notation, the criterion function $\mathcal{Q}$ given in \eqref{def:crit:centered} satisfies $\mathcal{Q}(\phi)= \Ex[\Gamma(S_i, S_j, \phi)]$.

For the asymptotic analysis we make use of the Hoeffding decomposition of a U-statistic (see e.g. \cite{vanVaart98})
\begin{align}\label{Hoeffding}
\mathcal{Q}_n(\phi) =   \mathcal{Q}(\phi) + \nu_n(\phi) +  \xi_n(\phi) 
\end{align}
with short hand notations
\begin{align*}
\nu_n(\phi) &:= \frac{1}{n}\sum_{i=1}^{n} \nu(S_i, \phi),\\
\nu(S_i, \phi) &:=  \Ex[\Gamma(S_i, S_j, \phi)| S_i] + \Ex[\Gamma(S_j, S_i, \phi)|S_i]- 2\Ex[\Gamma(S_i, S_j,\phi)],  \\
\xi_n(\phi) &:= \frac{2}{n(n-1)} \sum_{1 \leq i < j \leq n} \xi(S_i, S_j, \phi), \\
\xi(S_i, S_j, \phi) &:= \Gamma(S_i,S_j, \phi) - \Ex[\Gamma(S_i,S_j, \phi)| S_i] - \Ex[\Gamma(S_i, S_j, \phi)|S_j] + \Ex[\Gamma(S_i, S_j,\phi)].
\end{align*}
This decomposition is frequently deviced in the rank estimation literature to obtain asymptotic results, see e.g. \cite{sherman93}. The first summand in the decomposition is a smooth function of the parameter $\phi$, $\nu_n$ is an empirical process and $\xi_n$ a degenerate U-process, both indexed by the function space $\mathcal{G}_K$.

Further, we define the function classes $\mathcal{F}_{\nu,K}=\{ \nu(\cdot, \phi): \phi \in \mathcal{G}^\delta_K\}$ and  $\mathcal{F}_{\xi,K}=\{ \xi(\cdot, \cdot, \phi): \phi \in \mathcal{G}_K^\delta\}$. Let $\overline F_\nu$ and $\overline F_\xi$ denote respective envelope functions .
The envelope function is defined as any function satisfying $|\nu(\cdot,\phi)|\leq \overline F_\nu(\cdot)$.
In this setting, $\overline F_\nu(S_i) = |Y_i| + 3 \Ex[|Y_i|]$, since
\begin{align*}
|\nu(S_i, \phi)| = &| Y_i\Ex[\mathds{1}\{\phi(Z_i) > \phi(Z_j)\} - \mathds{1}\{g(Z_i) > g(Z_j)\}|Z_i]\\
& +\Ex[Y_j \left(\mathds{1}\{\phi(Z_j) > \phi(Z_i)\} - \mathds{1}\{g(Z_j) > g(Z_i)\} \right)|Z_i] \\
& - 2\Ex[ Y_i \left(\mathds{1}\{\phi(Z_i) > \phi(Z_j)\} - \mathds{1}\{g(Z_i) > g(Z_j)\} \right) ]| \\
\leq & |Y_i| + 3\Ex[|Y_i|],
\end{align*}
where $\mynorm*{\overline F_\nu}_{L^2(S)} \leq \sqrt{4\Ex[Y^2]} =: C_\nu$. 
In addition we have $\overline F_{\xi}(S_i, S_j)=2|Y_i|+2\Ex[|Y_i|]$ as
\begin{align*}
|\xi(S_i, S_j, \phi)| = & | Y_i(\mathds{1}\{\phi(Z_i) > \phi(Z_j)\} - \mathds{1}\{g(Z_i) > g(Z_j)\}) \\
-&  Y_i\Ex[\mathds{1}\{\phi(Z_i) > \phi(Z_j)\} - \mathds{1}\{g(Z_i) > g(Z_j)\} |Z_i ] \\
-&  \Ex[Y_i(\mathds{1}\{\phi(Z_i) > \phi(Z_j)\} - \mathds{1}\{g(Z_i) > g(Z_j)\}) |Z_j ] \\
+&  \Ex[Y_i(\mathds{1}\{\phi(Z_i) > \phi(Z_j)\} - \mathds{1}\{g(Z_i) > g(Z_j)\})] |
\end{align*}
and $\mynorm*{\overline F}_{L^2(S)} \leq \sqrt{12\Ex[Y^2]} =: C_\eta$. By Assumption \ref{A:Rate} (iii) we have $C_\nu, C_\eta< \infty$.
Ultimately, we define the bracketing integral $J_{[]}$ of the space $\mathcal{F}_{\nu, K}$  
\begin{align*}
J_{[]}(1,\mathcal{F}_{\nu,K},  L^2(S))= \int_{0}^{1} \sqrt{1+ \log N_{[]} (\epsilon \cdot \mynorm*{\overline F_\nu}_{L^2(S)},\mathcal{F}_{\nu,K}, L^2(S)) } d\epsilon.
\end{align*}
and analogously for $\mathcal{F}_{\xi, K}$.

\begin{proof}[\textsc{Proof of Theorem \ref{thm::rate}.}]
We begin by noting that consistency of $\widehat{g}$ in the $L^2$-norm follows from Lemma \ref{App::Lemma2}.
Due to the consistency result in  Lemma \ref{App::Lemma2}, we may restrict the function spaces to a local neighborhood around $g$, i.e. we define the space $\mathcal{G}^\delta_K = \{\phi \in \mathcal{G}_K : \mynorm*{\phi-g}_{L^2(Z)}< \delta\}$ and assume that $\widehat g \in \mathcal{G}^\delta_K $. Further we introduce the space $\mathcal{G}^{\delta,r_n}_K =\{ \phi \in \mathcal{G}^\delta_K :  Q_g(\phi-g) > Mr_n  \}$ where $M>0$.
It holds that
\begin{align*}
 \PP\big( Q_g(\widehat{g}-g) &\geq Mr_n\big)
\leq  \PP \left( \sup_{\phi\in\mathcal{G}^{\delta,r_n}_K} \mathcal Q_n(\phi) \geq  \mathcal Q_n(\Pi_Kg) \right)\\
\leq & \PP \left( \sup_{\phi\in\mathcal{G}^{\delta,r_n}_K} \mathcal Q(\phi) + \nu_n(\phi) + \xi_n ( \phi) \ \geq \mathcal Q(\Pi_Kg) + \nu_n(\Pi_Kg) + \xi_n (\Pi_Kg) \right),
\end{align*}
by applying the Hoeffding decomposition (\ref{Hoeffding}).
Due to Assumption \ref{A:Rate} (iv) we have local equivalence of $|\mathcal Q(\cdot)|$ and $Q_g(\cdot)$. Since $\mathcal Q(\cdot)$ is negative and thus $|\mathcal Q(\cdot)|=-\mathcal Q (\cdot)$ it follows that
{\small \begin{flalign*}
  \PP \big( Q_g(\widehat{g}-g) \geq Mr_n \big) &\leq  \PP \left( \sup_{\phi\in\mathcal{G}^{\delta,r_n}_K} \Big(\mathcal Q(\phi) +  \nu_n(\phi)-\nu_n(\Pi_Kg) + \xi_n (\phi) -\xi_n (\Pi_Kg)) \Big)\geq - \eta Q_g(\Pi_Kg-g) \right) &\\
\leq & \PP \left(\sup_{\phi\in\mathcal{G}^{\delta,r_n}_K} \Big(\nu_n(\phi)-\nu_n(\Pi_Kg) + \xi_n (\phi) -\xi_n (\Pi_Kg)+\eta Q_g(\Pi_Kg-g) \Big)\geq \inf_{\phi\in\mathcal{G}^{\delta,r_n}_K} |\mathcal Q(\phi)| \right) \\
\leq & \PP \left(\sup_{\phi\in\mathcal{G}^\delta_K} \nu_n(\phi)-\nu_n(\Pi_Kg)  + \sup_{\phi\in\mathcal{G}^\delta_K} \xi_n (\phi) -\xi_n (\Pi_Kg) + \eta Q_g(\Pi_Kg-g) \geq C_2Mr_n  \right),
\end{flalign*}}%
%\begin{align*}
%& \PP \left( \mynorm*{\widehat g-g} \geq Mr_n \right) \\
% \leq & \PP \left( \sup_{\mathcal{G}^\delta_K: \mynorm*{\phi-g} > Mr_n} \mathcal Q_n(\phi) \geq  \mathcal Q_n(\Pi_Kg) \right) \\
%\leq & \PP \left( \sup_{\mathcal{G}^\delta_K: \mynorm*{\phi-g} > Mr_n} \mathcal Q(\phi) + P_n v(\phi) + U_n \xi ( \phi) \ \geq \mathcal Q(\Pi_Kg) + P_n v(\Pi_Kg) + U_n \xi (\Pi_Kg) \right) \\
%\leq & \PP \left( \sup_{\mathcal{G}^\delta_K: \mynorm*{\phi-g} > Mr_n} \mathcal Q(\phi) + P_n (v(\phi)-v(\Pi_Kg)) + U_n (\xi (\phi) -\xi (\Pi_Kg)) \geq -C_1 ||\Pi_Kg-g|| \right) \\
%\leq & \PP \left(\sup_{\mathcal{G}^\delta_K: \mynorm*{\phi-g} > Mr_n} P_n (v(\phi)-v(\Pi_Kg)) + U_n (\xi (\phi) -\xi (\Pi_Kg)) \geq C_2Mr_n -C_1 ||\Pi_Kg-g|| \right) \\
%\leq & \PP \left(\sup_{\mathcal{G}_K} P_n ( v(\phi)-v(\Pi_Kg))  + \sup_{\mathcal{G}_K} U_n (\xi (\phi) -\xi (\Pi_Kg)) + C_1 ||\Pi_Kg-g|| \geq C_2Mr_n  \right) 
%\end{align*}
where it remains to study the asymptotic behavior of each summand in the last line separately. 
Note that both summands on the left hand-side are positive, hence if $\sup_{\mathcal{G}^\delta_K}  \nu_n(\phi)$ is bounded in probability so is $ \nu_n(\Pi_Kg)$ and similarly for $\xi_n$.

First we study the asymptotic behavior of the empirical process part $\sup_{\phi \in\mathcal G^\delta_K}  \nu_n(\phi)$. Recall the definition $\mathcal{F}_{\nu,K}=\{ \nu(\cdot, \phi): \phi \in \mathcal{G}^\delta_K\}$ with envelope $\overline F_\nu$. 
By applying the last display of Theorem 2.14.2 of \cite{Vaart2000} we can conclude that 
\begin{align*}
\Ex \Big| \sup_{\phi \in\mathcal G^\delta_K} \nu_n(\phi)\Big| = \Ex \Big|\sup_{\nu \in \mathcal{F}_{\nu,K}} \frac{1}{n}\sum_{i=1}^{n} \nu(S_i) \Big| \leq J_{[]}(1,\mathcal{F}_{\nu,K}, L^2(S))\cdot \mynorm*{\overline F_\nu}_{L^2(S)}\cdot n^{-1/2}
\end{align*}
where $\mynorm*{\overline F_\nu}_{L^2(S)} \leq \mynorm*{\overline F_\nu}_{L_\infty(S)} \leq C_\nu< \infty$. 
By Lemma \ref{App::Lemma1} (i) and (ii) we have
\begin{align*}
\log N_{[]} (\epsilon \cdot \mynorm*{\overline F_\nu}_{L_\infty(S)}, \mathcal{F}_{\nu,K},  L_\infty(S)) \leq c_0 K \log(1/\epsilon \cdot C_\nu^{-1} )
\end{align*}
and ultimately we obtain $J_{[]}(1, \mathcal{F}_{\nu,K}, L_\infty(S))=O(\sqrt{K})$ and by Markov's inequality $\sup_{\phi \in \mathcal{G}^\delta_K} \nu_n(\phi) = O_p(\sqrt{K/n})$.

It remains to analyze the convergence rate of the degenerate U-process $\sup_{\phi \in \mathcal{G}^\delta_K} \xi_n(\phi)$. Similar to Lemma A.1 in \cite{ClemLugosi08} we can make use of the following equality for second-order U-statistics 
\begin{align}\label{ClemLug}
\frac{1}{n(n-1)} \sum_{i \neq j} \xi(S_i, S_j, \phi) = \frac{1}{n!} \sum_{\pi} \frac{1}{\lfloor n/2 \rfloor} \sum_{i=1}^{\lfloor n/2 \rfloor} \xi(S_i, S_{\lfloor n/2 \rfloor + i}, \phi)
\end{align}
where $\pi$ is short-hand for all permutations of  $\{1,\dots, n\}$. Then applying the triangle inequality to (\ref{ClemLug}) leads to 
\begin{equation}\label{UStatIneq}
\Ex \left[ \Big| \sup_{\phi \in \mathcal{G}^\delta_K}  \frac{1}{n(n-1)} \sum_{i \neq j} \xi(S_i, S_j, \phi) \Big| \right] \leq \Ex \left[ \Big |\sup_{\phi \in \mathcal{G}^\delta_K}  \frac{1}{\lfloor n/2 \rfloor} \sum_{i=1}^{\lfloor n/2 \rfloor} \xi(S_i, S_{\lfloor n/2 \rfloor + i}, \phi) \Big | \right]
\end{equation}
from which we can conclude that for obtaining the convergence rate of the degenerate U-process on the left-hand side of (\ref{UStatIneq}) it is sufficient to analyze the convergence rate of an empirical process with kernel $\xi$ indexed by the function $\mathcal{G}^\delta_K$.

The kernel $\xi$ contains non-smooth indicator functions so we cannot apply the exact same reasoning we used earlier to derive a bound for $\nu_n$, as $\xi(S_i, S_j, \phi)$ is not continuous in $\phi$.  
However we can use the fact that $\xi(\cdot, \cdot, \phi)$ belongs to a VC- subgraph family and we can thus derive the complexity bound in Lemma \ref{App::Lemma1} (iii).

Recall the definition $\mathcal{F}_{\xi, K}=\{ \xi(\cdot, \cdot, \phi): \phi \in \mathcal{G}_K^\delta\}$ and the associated envelope function $\overline F_\xi$.
Now we apply Theorem 2.14.1 of \cite{Vaart2000}
\begin{equation*}
\Ex \left[ \Big |\sup_{\phi \in \mathcal{G}^\delta_K}  \frac{1}{\lfloor n/2 \rfloor} \sum_{i=1}^{\lfloor n/2 \rfloor} \xi(S_i, S_{\lfloor n/2 \rfloor + i}, \phi) \Big | \right] \leq J_{[]}(1,\mathcal F_{\xi, K}, L^2(S))\mynorm*{\overline F_\xi}_{L^2(S)}\lfloor (n/2)\rfloor^{-1/2}
\end{equation*}
Applying Lemma \ref{App::Lemma1} (iii) we obtain the bound
\begin{align*}
J_{[]}(1, \mathcal{F}_{\xi, K}, L^2(S)) \leq  \int_{0}^{1} \sqrt{1+ c_1+c_2K\log(1/\epsilon) }d\epsilon = O(\sqrt{K})
\end{align*}
and by Markov's inequality that $\sup_{\phi \in \mathcal{G}^\delta_K}  \xi_n(\phi)=O_p(\sqrt{K/n})$.
%It suffices to find a bound for $sup_{\phi \in \mathcal{G}}  U_n \xi(\phi)$. 
%We can apply Corollary 8 in \cite{sherman94} to establish that $sup_{\phi \in \mathcal{G}}  U_n \xi(\phi)=o_p(1)$. The euclidean property of the function class $\{ \xi(\cdot, \cdot, \phi): \phi \in \mathcal{G}\}$ follows *from Assumption \ref{A:Rate} (iv).
Finally, we can conclude that
\begin{equation*}
\PP \left( Q_g(\widehat g-g) \geq Mr_n \right)
%\leq \PP \left(O_p(\sqrt{K/n})  + O_p(\sqrt{K/n}) + O(Q_g(\Pi_Kg-g)) \geq C_2Mr_n \right).
\leq \PP \left(\sup_{\phi \in \mathcal{G}^\delta_K}  \nu_n(\phi)  + \sup_{\phi \in \mathcal{G}^\delta_K}  \xi_n(\phi) + Q_g(\Pi_Kg-g) \geq C_2Mr_n \right).
\end{equation*}
with $\sup_{\phi \in \mathcal{G}^\delta_K}  \nu_n(\phi)=O_p(\sqrt{K/n})$ and $\sup_{\phi \in \mathcal{G}^\delta_K}  \xi_n(\phi)=O_p(\sqrt{K/n})$.
Consequently,  choosing $r_n=\max\{ \sqrt{K/n},Q_g(\Pi_Kg-g)\}$ we see that the right hand side probability converges to zero as $M \to \infty$. Thus $Q_g(\widehat{g}-g) =O_p(r_n)$.  By the definition of the sieve measure of ill-posedness $\tau_K$ we obtain
\begin{equation*}
\mynorm*{\widehat{g} - g}_{L^2(Z)}  \leq \tau_K Q_g(\widehat{g} - g) 
                            \leq \tau_K O_p\left( \max\{ \sqrt{K/n}, Q_g(\Pi_Kg-g)\} \right) = O_p\left( \tau_K \sqrt{K/n}, \mynorm*{\Pi_Kg-g}_{L^2(Z)}\right) 
\end{equation*}
%\begin{align*}
%\mynorm*{\widehat{g} - g}_{L^2(Z)} & \leq \mynorm*{\widehat{g} - \Pi_Kg}_{L^2(Z)} + \mynorm*{\Pi_Kg-g}_{L^2(Z)} \\
%& \leq \tau_K Q_g(\widehat{g} - \Pi_Kg) + \mynorm*{\Pi_Kg-g}_{L^2(Z)} \\
%& = O_p( \tau_K r_n) + \mynorm*{\Pi_Kg-g}_{L^2(Z)}
%\end{align*}
which concludes the proof. 
\end{proof}

\begin{lem}\label{App::Lemma1}
Under Assumption \ref{A:Rate} it holds that\\
(i) $\sup_{\mynorm*{\phi-g}_\infty \leq \delta} |\nu(S_i,\phi)|\leq M_1(S_i)\cdot \delta$ with $\Ex[M_1(S_i)] < \infty $, 	\\
(ii) $\log N_{[]} (\epsilon,\mathcal{F}_{\nu,K}, L_\infty(S)) \leq  c_0K\log(1/\epsilon) $ for some positive constant $c_0$, \\
(iii)  $\log N(\epsilon,\mathcal{F}_{\xi,K}, L^2(S)) \leq c_1+c_2K\log(1/\epsilon)$, for positive constants $c_1,c_2$.
\end{lem}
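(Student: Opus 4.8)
The statement splits into two essentially independent arguments: parts~(i)--(ii) are driven by a Lipschitz-in-$\mynorm*{\phi-g}_\infty$ estimate for the H\'ajek-projection kernel $\nu(\cdot,\phi)$, while part~(iii) requires a Vapnik--Chervonenkis (VC) argument because the kernel $\xi$ depends on $\phi$ only through the discontinuous rank indicator. The plan is to establish~(i) first, deduce~(ii) from it together with a standard finite-dimensional covering bound on the linear sieve, and handle~(iii) on its own.

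For~(i), fix $\phi$ with $\mynorm*{\phi-g}_\infty\le\delta$ and write $\Delta_{ij}:=\mathds{1}\{\phi(Z_i)>\phi(Z_j)\}-\mathds{1}\{g(Z_i)>g(Z_j)\}$. The geometric fact I would use is that $\Delta_{ij}\neq 0$ forces $|g(Z_i)-g(Z_j)|\le 2\delta$, since $|(\phi(Z_i)-\phi(Z_j))-(g(Z_i)-g(Z_j))|\le 2\delta$; hence $\PP(\Delta_{ij}\neq 0\mid Z_i)\le F_{g(Z)}(g(Z_i)+2\delta)-F_{g(Z)}(g(Z_i)-2\delta)\le 4C\delta$ by the Lipschitz cdf condition \ref{A:Rate}(v). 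Recalling $\nu(S_i,\phi)=Y_i\,\Ex[\Delta_{ij}\mid Z_i]+\Ex[Y_j\Delta_{ji}\mid Z_i]-2\,\Ex[Y_i\Delta_{ij}]$, the first summand is at most $4C\delta|Y_i|$, the last is a constant at most $4C\delta\,\Ex|Y|$, and the middle one is at most $\Ex\big[|Y_j|\mathds{1}\{|g(Z_j)-g(Z_i)|\le 2\delta\}\mid Z_i\big]$, which is $O(\delta)$ uniformly in $Z_i$ once $\Ex[|Y|\mid Z=\cdot]$ is bounded (a mild regularity condition maintained alongside $\Ex Y^2<\infty$; equivalently one leaves this term inside $M_1(S_i)$ and bounds it via boundedness of $f_{g(Z)}$, which is implied by~\ref{A:Rate}(v)). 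Collecting the three bounds gives $\sup_{\mynorm*{\phi-g}_\infty\le\delta}|\nu(S_i,\phi)|\le M_1(S_i)\,\delta$ with, say, $M_1(S_i)=C'(1+|Y_i|)$ and $\Ex M_1(S_i)<\infty$ by~\ref{A:Rate}(iii).

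For~(ii), on the linear sieve $\mathcal G_K=\{\gamma'p^K(\cdot)\}$ the local ball $\mathcal G_K^\delta$ lies in a fixed Euclidean ball in coefficient space, using the standard norm equivalence $\mynorm*{\gamma'p^K}_\infty\asymp\|\gamma\|$ for polynomial/spline bases on a compact support, so $\log N(\eta,\mathcal G_K^\delta,\mynorm*{\cdot}_\infty)\le c_0K\log(1/\eta)$. Combining this with~(i): an $\eta$-net $\{\phi_1,\dots,\phi_N\}$ of $\mathcal G_K^\delta$ in $\mynorm*{\cdot}_\infty$ yields the brackets $[\,\nu(\cdot,\phi_k)-M_1\eta,\ \nu(\cdot,\phi_k)+M_1\eta\,]$ covering $\mathcal F_{\nu,K}$, each of $L_\infty(S)$-width $O(\eta)$, so $\log N_{[]}(\epsilon,\mathcal F_{\nu,K},L_\infty(S))\le c_0K\log(1/\epsilon)$ after absorbing constants.

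For~(iii), bracketing is unavailable, so I would argue through VC dimension. The class $\mathcal H_K=\{(z_i,z_j)\mapsto\mathds{1}\{\phi(z_i)>\phi(z_j)\}:\phi\in\mathcal G_K\}$ equals $\{\mathds{1}\{\gamma'(p^K(z_i)-p^K(z_j))>0\}:\gamma\in\mathbb{R}^K\}$, i.e. the pullback of the class of homogeneous half-spaces in $\mathbb{R}^K$ under $(z_i,z_j)\mapsto p^K(z_i)-p^K(z_j)$; hence $\mathcal H_K$ is VC of index at most $K+1$, and by the uniform-entropy bound for VC classes \cite[Theorem~2.6.7]{Vaart2000} we get $\sup_Q\log N(\epsilon,\mathcal H_K,L^2(Q))\le c_1+c_2K\log(1/\epsilon)$. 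To transfer this to $\mathcal F_{\xi,K}$, I would decompose $\xi(\cdot,\cdot,\phi)$ into its four Hoeffding pieces: the $\phi$-dependence enters only through $Y_i\mathds{1}\{\phi(Z_i)>\phi(Z_j)\}$ or through conditional expectations $b(z,\phi):=\Ex[\mathds{1}\{\phi(z)>\phi(Z_j)\}]$ of it. For the first, $\mynorm*{Y_i(h-h')}_{L^2(S)}^2=\Ex[Y_i^2]\,\mynorm*{h-h'}_{L^2(Q)}^2$ with $Q$ the probability measure $\propto Y_i^2\,dP_S$, so the VC cover transfers at the cost of the factor $\sqrt{\Ex Y^2}$ (uniformity over $Q$ being exactly what the VC bound provides, and this is the only place square-integrability of $Y$ is used); for the second, Jensen gives $\mynorm*{b(\cdot,\phi)-b(\cdot,\phi')}_{L^2(S)}\le\mynorm*{\mathds{1}\{\phi(Z_i)>\phi(Z_j)\}-\mathds{1}\{\phi'(Z_i)>\phi'(Z_j)\}}_{L^2(S)}$, so $b(\cdot,\phi)$ inherits the same covering numbers. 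Since $\xi$ is a fixed finite sum of $O(1)$ such pieces, taking products of the nets and adding log-cardinalities yields $\log N(\epsilon,\mathcal F_{\xi,K},L^2(S))\le c_1+c_2K\log(1/\epsilon)$. The main obstacle is precisely this last part: one has to recognise the finite-dimensional VC structure hidden in the rank indicators (bracketing fails outright), and then pass to the $Y^2$-reweighted measure to accommodate that $Y$ is merely square-integrable, which is legitimate only because VC covering numbers are uniform over the base measure; a secondary subtlety is the conditional-moment control in~(i) needed to upgrade a crude $\sqrt\delta$ bound (from Cauchy--Schwarz) to the required $\delta$-Lipschitz bound.
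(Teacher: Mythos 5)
Your proof is correct and follows essentially the same route as the paper: part (i) via the indicator-comparison trick combined with the Lipschitz cdf of $g(Z)$ from Assumption \ref{A:Rate}(v), part (ii) by reducing the bracketing entropy of $\mathcal{F}_{\nu,K}$ to the covering number of the $K$-dimensional linear sieve, and part (iii) by a VC argument exploiting the finite-dimensional linear structure together with \cite[Theorem 2.6.7]{vanVaart98}. The only real difference is cosmetic and lies in (iii): you isolate the half-space class $\{\mathds{1}\{\gamma'(p^K(z_i)-p^K(z_j))>0\}\}$ and absorb the factor $Y_i$ via the envelope/change-of-measure device, whereas the paper writes out the subgraph of $\Gamma(\cdot,\cdot,\phi)$ as a union of positivity sets of a fixed finite-dimensional vector space of functions; the two are equivalent, and your explicit handling of the conditional-expectation term in (i) (noting that Cauchy--Schwarz alone gives only $\sqrt{\delta}$) is, if anything, more careful than the paper's.
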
 
\begin{proof}[\textsc{Proof of Lemma \ref{App::Lemma1}.}] 
Proof of part (i). It holds that
\begin{align*}
\nu(S_i, \phi) = & Y_i\Ex[\mathds{1}\{\phi(Z_i) > \phi(Z_j)\} - \mathds{1}\{g(Z_i) > g(Z_j)\}|Z_i]\\
& +\Ex[Y_j \left(\mathds{1}\{\phi(Z_j) > \phi(Z_i)\} - \mathds{1}\{g(Z_j) > g(Z_i)\} \right)|Z_i] \\
& - 2\Ex[ Y_i \left(\mathds{1}\{\phi(Z_i) > \phi(Z_j)\} - \mathds{1}\{g(Z_i) > g(Z_j)\} \right) ]
\end{align*}
We make use of the fact that as $ \mynorm*{\phi-g}_\infty \leq \delta$ and thus $g(z) - \delta \leq \phi(z)\leq g(z)+\delta$ for any $z$ in the support of $Z$. Following \cite{ChenShen2003} (p. 1599-1600) we have that
\begin{equation*}
\sup_{\mynorm*{\phi-g}_\infty \leq \delta}|\mathds{1}\{\phi(Z_j) < \phi(Z_i) \} - \mathds{1}\{g(Z_j) < g(Z_i)\} |
\leq  |\mathds{1}\{g(Z_j) < \phi(Z_i)+\delta \} - \mathds{1}\{g(Z_j) < g(Z_i) - \delta \} | 
\end{equation*}
and thus 
\begin{align*}
|\nu(S_i, \phi)| \leq  &  |Y_i|\cdot |F_{g(Z)}( \phi(Z_i)+\delta) - F_{g(Z)}( g(Z_i) -\delta) |  \\
                      & + |\Ex[Y_i|Z_i]|\cdot |F_{g(Z)}( \phi(Z_i)+\delta) - F_{g(Z)}( g(Z_i) -\delta) | \\
                      &+ |\Ex[Y_i]|\cdot \Ex[|F_{g(Z)}( \phi(Z_i)+\delta) - F_{g(Z)}( g(Z_i) -\delta) |  ] \\
                 \leq & (|Y_i| + |\Ex[Y_i|Z_i]| + |\Ex[Y_i]|) \cdot 3\delta    
\end{align*}
where the last inequality follows from Assumption \ref{A:Rate} (v), the Lipschitz continuity for the cdf of $g(Z)$. Define $M_1(S_i) = |Y_i| + |\Ex[Y_i|Z_i]| + |\Ex[Y_i]| $. From Assumption \ref{A:Rate} (iii) follows that $\Ex[M_1(S_i)]< \infty$ which concludes the argument.

We continue with the proof of  part (ii).
By Lemma \ref{App::Lemma1} (i) we have
\begin{align*}
\log N_{[]} (\epsilon, \mathcal{F}_{\nu,K},  L_\infty(S)) \leq \log N_{[]} (\epsilon, \mathcal{G}_K,  L_\infty(Z)) \leq cK \log(1/\epsilon)
\end{align*} 
where both inequalities are due to \cite{Chen07} (pp. 5595 and 5601).

We conclude with the proof of part (iii). We make use of the decomposition $\xi(S_i, S_j, \phi)  = \xi_1(S_i, S_j, \phi) + \xi_2(S_i, S_j, \phi)$ where $\xi_1(S_i, S_j, \phi)  = \Gamma(S_i,S_j, \phi) $ and 
\begin{align*}
\xi_2(S_i, S_j, \phi)  = - \Ex[\Gamma(S_i,S_j, \phi)| S_i] - \Ex[\Gamma(S_i, S_j, \phi)|S_j] + \Ex[\Gamma(S_i, S_j,\phi)].
\end{align*}
Following for instance \cite[Lemma 16]{NolanPollard} we conclude
\begin{align*}
\log N(\epsilon,\mathcal{F}_{\xi,K}, L^2(S)) \leq \log N(\epsilon,\mathcal{F}_{\xi_1,K}, L^2(S)) + \log N(\epsilon,\mathcal{F}_{\xi_2,K}, L^2(S)).
\end{align*}
Similar to the proof of part (ii) of Lemma \ref{App::Lemma1} we obtain
$\log N(\epsilon,\mathcal{F}_{\xi_2,K}, L^2(S))\leq cK \log(1/\epsilon) $ for some constant $c$.
Below, we follow Chapter 5 of \cite{sherman93} to establish that $ \mathcal{F}_{\xi_1,K}$ belongs to a VC-subgraph class.
To this end define the subgraph 
\begin{flalign*}
\text{subgraph}\left(\xi_1(\cdot, \cdot, \phi) \right)=&\{(s_i,s_j,t) \in \textsl{supp}(S)^2\times \mathbb{R}: 0<t< y_i[\mathds{1}\{\phi(z_i) > \phi(z_j) \} - \mathds{1}\{g(z_i) < g(z_j)\} ] \}& \\
 = &\{y_i > 0\}\{\phi(z_i)- \phi(z_j) > 0  \}\{t > 0 \}\{t < \overline F_{\xi_1}(z_i, z_j) \} \{g(z_i)-g(z_j)  < 0 \} \\
 &\cup \{y_i < 0\}\{\phi(z_i)- \phi(z_j) < 0  \}\{t > 0 \}\{t < \overline F_{\xi,1}(z_i, z_j) \} \{g(z_i)-g(z_j)  > 0 \}
\end{flalign*}
and introduce the function 
\begin{align*}
m(t, s_1, s_2; \gamma_1, \gamma_2, \pi_1, \pi_2):=\gamma_1t+ \gamma_2y_1+(g(z_1), p^K(z_2))'\pi_1 + (g(z_2), p^K(z_2))'\pi_2 
\end{align*} 
with the associated function space
\begin{align*}
\mathcal{M}=\{m(\cdot, \cdot, \cdot; \gamma_1, \gamma_2, \pi_1, \pi_2): \gamma_1\in \mathbb{R}, \gamma_2\in \mathbb{R}, \pi_1\in \mathbb{R}^{K+1}, \pi_2\in \mathbb{R}^{K+1}\}.
\end{align*} 
Note that $\mathcal{M}$ is a finite vector space of dimension $2(K+2)$ and the subgraph can be written as
\begin{align}\label{subgraph_decomp}
\text{subgraph}\left(\xi_1(\cdot, \cdot, \phi) \right) & =\bigcup_{i=1}^{10} \{m_i > 0\}
\end{align}
with functions $m_i \in \mathcal{M}$ for any $i=1,\dots, 10$. 
Following e.g. Lemma 2.4 and 2.5 in \cite{pakespollard89} it can be established that $\text{subgraph}\left(\xi_1(\cdot, \cdot, \phi) \right)$ belongs to a VC-class of sets and thus the space $\mathcal{F}_{\xi_1}$ is a VC-class of functions.
To bound the complexity of the space we require the VC-index of $\mathcal{F}_{\xi_1}$ which we denote as $V(\mathcal{F}_{\xi_1})=V(\text{subgraph}(\xi_1))$.
 
From \cite[ Lemma 18]{Pollard84} it follows that $V(\{m_i > 0\})\leq 2(K+2)$. Applying in \cite[Theorem 1.1]{vdVaart_VCdim} to (\ref{subgraph_decomp}) then leads to $V(\text{subgraph}(\xi_1))\lesssim 2(K+2)$, so the VC-index of the space $\mathcal{F}_{\xi_1}$ increases with the same order as the sieve dimension $K$. 
Now applying \cite[Theorem 2.6.7]{vanVaart98} yields
\begin{align*}
\log N(\epsilon,\mathcal{F}_{\xi_1,K}, L^2(S)) &\leq \log( C\cdot V(\mathcal{F}_{\xi_1}) (16e)^{V(\mathcal{F}_{\xi_1})}(1/\epsilon)^{2V(\mathcal{F}_{\xi_1})-2}) \\
& = \log(C)+ \log(2(K+2)) + 2(K+2)\log(16e)  + 2(K+2)\log(1/\epsilon)
\end{align*}
and together with $\log N(\epsilon,\mathcal{F}_{\xi_2,K}, L^2(S))\leq cK \log(1/\epsilon) $ the stated result follows. 
\end{proof}

\begin{lem}\label{App::Lemma2}
	Under Assumptions \ref{A:IV}--\ref{A:Rate} it holds that $\mynorm{\widehat g - g}_{L^2(Z)} = o_p(1)$. 
\end{lem}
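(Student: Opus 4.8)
The plan is to follow the classical route for consistency of sieve M-estimators (e.g.\ \cite{Chen07}), adapted to the second-order U-process structure of the rank criterion and to the ill-posedness. Three ingredients are needed: (i) a uniform law of large numbers for the centered empirical criterion $\mathcal Q_n$ over the whole sieve $\mathcal G_K$; (ii) a well-separation property of the population criterion $\mathcal Q$ at $g$; and (iii) control of the sieve approximation bias $\mathcal Q(\Pi_K g)$. Combining these with the defining inequality $\mathcal Q_n(\widehat g)\ge\mathcal Q_n(\Pi_K g)$ delivers $L^2$-consistency, and Assumption~\ref{A:Rate}(vi) enters to guarantee that the stochastic fluctuations do not overwhelm the curvature of $\mathcal Q$ near $g$.

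For (i), I would reuse the Hoeffding decomposition $\mathcal Q_n(\phi)=\mathcal Q(\phi)+\nu_n(\phi)+\xi_n(\phi)$ together with the complexity bounds of Lemma~\ref{App::Lemma1}, which hold verbatim with the index set enlarged from $\mathcal G_K^\delta$ to $\mathcal G_K$: the envelopes $\overline F_\nu,\overline F_\xi$ are $\delta$-free and the entropy stays $O(K\log(1/\epsilon))$, provided the sieve is taken to be uniformly bounded under the imposed normalization. The maximal inequalities of \cite{Vaart2000} then give $\sup_{\phi\in\mathcal G_K}|\nu_n(\phi)|=O_p(\sqrt{K/n})$ and $\sup_{\phi\in\mathcal G_K}|\xi_n(\phi)|=O_p(\sqrt{K/n})$; since $\tau_K$ is bounded away from zero, Assumption~\ref{A:Rate}(vi) forces $K/n\to0$, so $\sup_{\phi\in\mathcal G_K}|\mathcal Q_n(\phi)-\mathcal Q(\phi)|=o_p(1)$. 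For (iii), once $K$ is large enough that $\|\Pi_K g-g\|_{L^2(Z)}=O(K^{-\alpha/d_z})<\delta$, the vector $\Pi_K g$ lies in the local neighborhood, so the tangential cone condition in Assumption~\ref{A:Rate}(iv), together with the fact that $Q_g$ is dominated by the $L^2$-norm, gives $|\mathcal Q(\Pi_K g)|\le(1+\eta)\,Q_g(\Pi_K g-g)\to0$; recall $\mathcal Q$ is centered with $\mathcal Q(g)=0$ and non-positive by Theorem~\ref{thm:ident}.

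Assembling (i) and (iii) with the basic inequality yields $\mathcal Q(\widehat g)\ge\mathcal Q(\Pi_K g)-2\sup_{\mathcal G_K}|\nu_n|-2\sup_{\mathcal G_K}|\xi_n|=-o_p(1)$, i.e.\ $|\mathcal Q(\widehat g)|=o_p(1)$. It remains to convert this into $\|\widehat g-g\|_{L^2(Z)}=o_p(1)$. For (ii), the uniqueness part of Theorem~\ref{thm:ident} --- combined with the point-identifying normalization of Corollary~\ref{coro:ident}, upper semicontinuity of $\mathcal Q$ (which follows from Lipschitz continuity of $F_{g(Z)}$, Assumption~\ref{A:Rate}(v)), and compactness of the normalized parameter space --- should give, for every $\epsilon>0$, a constant $c_\epsilon>0$ with $\sup\{\mathcal Q(\phi):\phi\in\mathcal G_K,\ \|\phi-g\|_{L^2(Z)}\ge\epsilon\}\le-c_\epsilon$, uniformly in $K$. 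Hence $\{\|\widehat g-g\|_{L^2(Z)}\ge\epsilon\}$ would force $\mathcal Q(\widehat g)\le-c_\epsilon$, contradicting $|\mathcal Q(\widehat g)|=o_p(1)$, so $\PP(\|\widehat g-g\|_{L^2(Z)}\ge\epsilon)\to0$ for every $\epsilon$. Equivalently, one can argue quantitatively as in the proof of Theorem~\ref{thm::rate}: on $\{\widehat g\in\mathcal G_K^\delta\}$ the tangential cone condition gives $Q_g(\widehat g-g)\le|\mathcal Q(\widehat g)|/(1-\eta)$, refined to $O_p(\sqrt{K/n})$ plus the approximation bias via the U-process bounds, whence $\|\widehat g-g\|_{L^2(Z)}\le\tau_K\,Q_g(\widehat g-g)+\|\Pi_K g-g\|_{L^2(Z)}=o_p(1)$ by Assumption~\ref{A:Rate}(vi).

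The main obstacle is exactly this last passage from ``small criterion value'' to ``small $L^2$-distance'': because the ill-posedness makes $\mathcal Q$ flat near $g$, the well-separation in step (ii) must be made uniform across the growing sieve spaces $\mathcal G_K$ (so that $c_\epsilon$ does not degrade with $K$), and one must ensure the noise level $\tau_K\sqrt{K/n}$ genuinely vanishes --- which is the precise role of Assumption~\ref{A:Rate}(vi) and also what licenses the proof of Theorem~\ref{thm::rate} to restrict attention to the local ball $\mathcal G_K^\delta$. A secondary point to verify is that the sieve spaces are uniformly bounded (or that the uniqueness in Theorem~\ref{thm:ident} extends to the closure $\overline{\cup_K\mathcal G_K}$), which is what makes the compactness argument behind step (ii) legitimate.
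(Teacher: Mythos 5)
Your overall architecture is the same as the paper's: the proof there verifies the conditions of Lemma A.2 of \cite{ChenPouzo12}, which amounts to exactly your three ingredients --- a uniform convergence bound $\sup_{\phi\in\mathcal G_K}|\mathcal Q_n(\phi)-\mathcal Q(\phi)|\lesssim\sqrt{K/n}$ obtained from the Hoeffding decomposition and the entropy bounds of Lemma \ref{App::Lemma1} applied over all of $\mathcal G_K$, control of the bias $|\mathcal Q(\Pi_K g)-\mathcal Q(g)|$ via the tangential cone condition, and a separation (``identifiable uniqueness'') quantity $g_0(k,n,\epsilon)=\inf\{|\mathcal Q(\phi)|:\phi\in\mathcal G_K,\ \mynorm{\phi-g}_{L^2(Z)}\ge\epsilon\}$. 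Steps (i) and (iii) of your proposal match the paper's argument.

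The step that would fail as you primarily state it is (ii). You ask for a separation constant $c_\epsilon>0$ that is \emph{uniform in $K$}, to be obtained from compactness and upper semicontinuity of $\mathcal Q$. But uniform-in-$K$ separation is precisely what ill-posedness destroys: as Remark \ref{rem:ill:posed} and Lemma \ref{Lem::AddSep} emphasize, $\mathcal Q(\phi)$ can be arbitrarily close to zero for $\phi$ arbitrarily far from $g$ in $L^2$, and if such a uniform $c_\epsilon$ existed then $\tau_K$ would play no role in the consistency argument at all --- $\sqrt{K/n}=o(1)$ would already suffice, making Assumption \ref{A:Rate}(vi) superfluous. The paper instead lets the separation degrade with $K$: using the local equivalence in Assumption \ref{A:Rate}(iv) and the definition of the sieve measure of ill-posedness, it lower-bounds $g_0(k,n,\epsilon)\ge\tau_K^{-1}\epsilon^*$, and then Assumption \ref{A:Rate}(vi) guarantees that both the sampling error $\sqrt{K/n}$ and the bias are of smaller order than $\tau_K^{-1}$, so the ratio required by Chen--Pouzo's condition (d) vanishes. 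You gesture at this resolution in your closing paragraph, but the quantitative alternative you sketch there is circular as written: it restricts attention to $\mathcal G_K^\delta$, i.e., it presupposes the very localization that consistency is meant to justify. The non-circular version is the one above --- bound the infimum of $|\mathcal Q|$ over the entire $\epsilon$-separated set from below by $\tau_K^{-1}\epsilon^*$ and let Assumption \ref{A:Rate}(vi) absorb the $\tau_K$. With that substitution for step (ii), your argument coincides with the paper's.
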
 
\begin{proof}[\textsc{Proof of Lemma \ref{App::Lemma2}.}] 
		We need to check the conditions in Lemma A.2 of \cite{ChenPouzo12}. 
		In their notation $\overline Q_n = \mathcal{Q}$ and
		\begin{align*}
		g_0(k, n, \epsilon)=\inf_{\phi \in \mathcal{G}_K: \mynorm{\phi- g}_{L^2(Z)} \geq \epsilon} |\mathcal{Q}(\phi)|
		\end{align*}
		Their condition a is thus satisfied and $g_0(n,k, \epsilon) > 0 $ by the identification result in Theorem \ref{thm:ident}. 
	    Condition b holds by Assumption \ref{A:Rate} (ii) and the fact that for large enough $K$ the following holds
       \begin{align*}
       |\mathcal Q(\Pi_Kg) -\mathcal Q(g)| \lesssim Q_g(\Pi_Kg - g) \lesssim \tau^{-1}_K \mynorm{\Pi_Kg -g}_{L^2(Z)},
       \end{align*} 
	    and thus  $\mathcal Q(\Pi_Kg) - \mathcal Q(g) = o(1)$.
	    Next, Condition c is implicitly assumed to hold and it remains to check condition d which translates as 
		\begin{align*}
		\frac{\max\{ |\mathcal Q(\Pi_Kg) - \mathcal Q(g)| , \sup_{\phi \in \mathcal{G}_K} |\mathcal{Q}_n(\phi) -\mathcal Q(\phi)|    \}}{g_0(n,k, \epsilon)} =o(1).
		\end{align*}
		Analogous to the empirical process result from (\ref{ClemLug}) and (\ref{UStatIneq}) and the subsequent proceedings, it holds that		$\sup_{\phi \in \mathcal{G}_K} |\mathcal{Q}_n(\phi) -\mathcal Q(\phi)| \lesssim \sqrt{K/n}$.
				Then ultimately consider that for any $\epsilon>0$ there is some $\epsilon^*>0$ that is  sufficiently small such that the local equivalence relation in Assumption \ref{A:Rate} (iv) is valid and we can conclude  
		{\small \begin{equation*}
		g_0(k,n, \epsilon) = \inf_{\mathcal{G}_K: \mynorm{\phi -g}_{L^2(Z)}\geq \epsilon} |\mathcal{Q}(\phi)|  \geq \inf_{\mathcal{G}_K: \mynorm{\phi -g}_{L^2(Z)}\geq \epsilon^*} Q_g(\phi-g) \geq \inf_{\mathcal{G}_K: \mynorm{\phi -g}_{L^2(Z)}\geq \epsilon^*} \tau^{-1}_K \mynorm{\phi -g}_{L^2(Z)}  \geq \tau^{-1}_K \epsilon^*.
		\end{equation*}}
	In summary we require that 
		\begin{equation*}
		\max\{ |\mathcal Q(\Pi_Kg) - \mathcal Q(g)| , \sup_{\phi \in \mathcal{G}_K} |\mathcal{Q}_n(\phi) -\mathcal Q(\phi)|    \} \big/ g_0(n,k, \epsilon) 
		\lesssim \tau_K\max\{ \sqrt{K/n} , \tau^{-1}_K \mynorm{\phi -g}_{L^2(Z)}    \} = o(1),
		\end{equation*}
		which follows from the rate restriction in Assumption \ref{A:Rate} (vi).
\end{proof}
%\newpage
%\spacingset{1}

 \bibliography{BiB}
  
\end{document}